\documentclass[journal]{IEEEtran}
\usepackage{cite}
\usepackage{amsmath,amssymb,amsfonts}
\usepackage{algorithmic}
\usepackage{graphicx}
\usepackage{algorithm,algorithmic}
\usepackage{hyperref}
\usepackage{subcaption}
\hypersetup{hidelinks} 
\usepackage{textcomp}
\usepackage{bm}

\usepackage{fvextra}

\usepackage{cleveref}
\crefformat{equation}{Eq.\,(#2#1#3)}
\crefname{section}{Sec.}{Secs.}
\crefformat{section}{Sec.\,#2#1#3}
\Crefformat{section}{Section\,#2#1#3}
\crefname{figure}{Fig.}{Figs.}
\crefformat{figure}{Fig.\,#2#1#3}
\Crefformat{figure}{Figure\,#2#1#3}

\usepackage{dsfont}
\usepackage[T1]{fontenc}

\newcommand{\secref}[1]{Sec.\,\ref{#1}}

\newcommand{\ie}{\emph{i.e.}}

\usepackage{amsthm}
\newtheorem{theorem}{Theorem}
\newtheorem{proposition}{Proposition}
\newtheorem{corollary}{Corollary}

\theoremstyle{definition}

\newtheorem{remark}{Remark}
\newtheorem{assumption}{Assumption}

\newcommand{\mI}{{\mathds{1}}}
\newcommand{\R}{\mathbb{R}}
\newcommand{\mZ}{\mathbb{Z}}

\newcommand{\Xset}{\mathbb{X}}
\newcommand{\Uset}{\mathbb{U}}

\begin{document}
\title{Physics-informed Reinforcement Learning for Stochastic Reach-Avoid Analysis}

\author{
Hikaru Hoshino, 
Yorie Nakahira
\vspace*{-5mm}
}

\maketitle

\begin{abstract}
Stochastic reach-avoid analysis of controlled dynamical systems is an important tool for safety-critical control under uncertainty, in which the reach-avoid probability is characterized by a Hamilton-Jacobi partial differential equation (PDE). However, solving this PDE using conventional numerical methods becomes computationally intractable as the system dimension increases. Physics-informed neural networks (PINNs) may converge to inaccurate local minima when trained primarily through PDE-residual minimization. Reinforcement learning (RL) offers a scalable alternative, but its learned value functions may be inaccurate or inconsistent with the governing PDE. This paper proposes a physics-informed RL (PIRL) framework that combines the complementary strengths of PINNs and RL for stochastic reach-avoid analysis. We develop a scheduled PIRL algorithm in which temporal-difference actor–critic learning first guides the critic toward a meaningful approximation of the reach-avoid value function. PDE-residual and boundary-condition losses are then introduced progressively to enforce consistency with the governing PDE and its boundary conditions. The proposed method mitigates the failure modes of conventional PINN techniques while achieving accuracy comparable to that of successfully trained PINNs. The effectiveness of the proposed framework is demonstrated through two case studies. 
\end{abstract}


\section{Introduction} \label{sec:introduction}

\IEEEPARstart{S}{tochastic} 
reach-avoid analysis is crucial for evaluating safety and synthesizing controllers under uncertainty~\cite{Abate2008,Summers2010,MohajerinEsfahani2016}. 
Given a target set and an avoid set, the objective is to characterize the set of initial states from which a controlled stochastic system can reach the target before entering the avoid set with probability greater than a prescribed threshold. A special case of these problems reduces to evaluating the reach-avoid probability, which can be computed by solving a Hamilton–Jacobi–Bellman (HJB)-type partial differential equation (PDE)~\cite{MohajerinEsfahani2016}. 
This formulation naturally captures both reachability and safety of controlled dynamical systems, and is particularly effective for complex dynamical systems operating in unpredictable environments, where noise and uncertainties pose significant concerns in operation and safety. 

A fundamental challenge in stochastic reach-avoid analysis is to accurately approximate the reach-avoid probability, particularly under maximally safe policies.
Solving the governing PDE using conventional numerical methods becomes computationally intractable as the system dimension increases.
Physics-informed neural networks (PINNs)~\cite{Raissi2019,Luo2025} provide a promising alternative by approximating PDE solutions with neural networks trained through PDE-residual and boundary-condition losses.
However, purely residual-based PINN training may be unreliable because the PDE residual can be reduced even when the learned solution converges to a poor local minimum or a nearly trivial solution~\cite{Leiteritz2021, S.Wang2023:PinnGuide}.
Reinforcement learning (RL), on the other hand, learns policies and value functions directly from sampled trajectories and has achieved remarkable success in high-dimensional optimal control and reachability problems~\cite{Akametalu2018,Fisac2019,Hsu2021,Ganai2024}.
However, standard RL provides only temporal-difference (TD) supervision along sampled trajectories, making it difficult to ensure the PDE consistency and accuracy of the learned value function.
Motivated by these complementary strengths and limitations, this paper proposes a physics-informed reinforcement learning (PIRL) framework that combines trajectory-based actor-critic learning with PDE-based constraints.
The proposed approach uses RL to guide the critic toward a meaningful reach-avoid value function and thereby mitigate the failure modes of residual-based PINN training.
The PDE residual and boundary-condition losses are then progressively incorporated to improve the PDE consistency and accuracy of the learned value function.

\subsection{Related work}

\subsubsection{Reachability Analysis Methods}

Classical reachability analysis methods require discretization of the state space, and their computational cost grows exponentially with the number of state variables.
As a result, their practical application has largely been limited to low-dimensional systems, typically up to four or five state dimensions~\cite{Mitchell2005}, unless special problem structures are exploited~\cite{Bansal2017}.
This scalability limitation has motivated the use of deep RL. 
Early work introduced RL formulations of deterministic Hamilton-Jacobi reachability problems that enabled Q-learning and TD methods~\cite{Akametalu2018,Fisac2019,Hsu2021}.
Building on this direction, RL has been used to approximate reachability-related value functions, enabling reachability analysis to be applied to higher-dimensional control problems as reviewed in \cite{Ganai2024}.
However, these deterministic methods consider bounded disturbances or adversarial uncertainty, rather than stochastic reach-avoid probabilities governed by diffusion processes. 

For stochastic systems, several methods have been developed to provide probabilistic safety and reach-avoid guarantees. Dynamic-programming and linear-programming formulations can directly characterize stochastic reachability and reach-avoid probabilities with rigorous guarantees~\cite{Abate2008,Summers2010,Schmid2023}. 
However, these approaches rely on gridding or finite basis-function approximations and generally suffer from poor scalability in high-dimensional state spaces. 
Sampling-based approaches can estimate pointwise stochastic reachability probabilities in high-dimensional systems~\cite{Thorpe2020,Thorpe2022}. 
However, they do not directly characterize the reach-avoid value function over the state space. 
For stochastic reach-avoid analysis or safe control, underapproximation and certificate-based methods have been proposed, including Lagrangian approximations~\cite{Gleason2017}, sum-of-squares and semidefinite-programming methods~\cite{Xue2021}, and barrier-like conditions that certify lower bounds on reach-avoid probabilities~\cite{Xue2026}.
Learning-based methods based on reach-avoid supermartingales have also been proposed~\cite{Zikelic2023}.
Although these approaches provide valuable probabilistic guarantees, they typically yield lower-bound certificates or conservative inner approximations of probability-threshold sets, rather than directly approximating the maximal reach-avoid probability over the state-time domain.

\subsubsection{Physics-informed Reinforcement Learning (PIRL)} 
\label{sec:review_pirl}

The term PIRL has been used broadly to refer to RL methods that incorporate physical knowledge into the learning process. 
As reviewed in~\cite{Banerjee2025}, existing PIRL methods exploit such knowledge in various forms, including state and action representations, reward design, policy regularization, safety constraints, and model learning. Most of these approaches use physics to regularize policy learning or improve model prediction, rather than to exploit the PDE characterization of the optimal value function in RL.

A more closely related direction is PIRL methods based on PINNs~\cite{Raissi2019}, which incorporate PDE constraints into neural-network training by penalizing differential-equation residuals and boundary-condition errors.
In the optimal control literature, PINNs have been used to approximate value functions and solve HJB-type equations. 
PINN-based policy iteration methods have been developed for optimal regulator problems~\cite{Meng2024,Wang2024PIRL}, while PDE-based value-function losses have also been incorporated into model-free RL in~\cite{Mukherjee2023}.
These approaches demonstrate the benefit of combining value-function learning with PDE constraints, but do not specifically address the optimization failure modes of residual-based training.
PINN training is known to suffer from such difficulties, including poor local minima and trivial solutions~\cite{Leiteritz2021,S.Wang2023:PinnGuide}.
This issue is particularly relevant to reach-avoid problems, where value information must be propagated from the boundary conditions rather than from interior rewards. 
In such problems, residual-based PINN training may converge to an incorrect or nearly trivial value function despite achieving a small PDE residual, as illustrated later in the numerical experiments (\cref{fig:1D_performance}).


\subsection{Contributions}

Motivated by these limitations, this paper proposes a PIRL framework for stochastic reach-avoid analysis of controlled dynamical systems that combines trajectory-based RL with physics-informed value-function learning. The main methodological contributions are summarized as follows.

First, we establish an RL formulation for stochastic reach-avoid analysis that directly represents the reach-avoid probability under maximally safe policies.
The reach-avoid objective is naturally expressed as a max-multiplicative cost function~\cite{Abate2008,Summers2010}, which is not directly compatible with standard RL algorithms based on additive cumulative rewards.
We show that, by augmenting the state with the remaining horizon and introducing absorbing target and avoid sets, this max-multiplicative reach-avoid probability can be transformed into an equivalent additive-reward RL problem (Proposition\,\ref{prop:RL}).
Unlike RL formulations developed for deterministic reach-avoid problems~\cite{Akametalu2018,Fisac2019,Hsu2021}, the proposed formulation applies to stochastic reach-avoid probabilities and enables standard actor-critic methods to learn both the maximal probability and its optimizing policy. 
Moreover, rather than providing only a lower bound or a conservative inner approximation of a probability-threshold set, as in certificate-based approaches~\cite{Gleason2017,Xue2021,Zikelic2023,Xue2026}, it directly approximates the underlying maximal reach-avoid probability.
We further establish the continuous-time characterization of the resulting RL value function, showing that it converges to the stochastic reach-avoid value function characterized by the corresponding HJB equation (Theorem\,\ref{thm:pde_characterization}). 
The residual-based PINN error analysis of \cite{Wang2026} is then applied to the policy-evaluation PDE associated with the learned policy, which yields an approximation-error bound in terms of the PDE residual and boundary-condition mismatch (Corollary 1). This bound provides a theoretical interpretation of the physics-informed losses used in the proposed framework.

Second, we develop a PIRL algorithm that combines TD learning from sampled trajectories with PDE residual and boundary-condition losses. 
While related PIRL approaches incorporate PINN-based losses into value-function learning~\cite{Meng2024,Wang2024PIRL,Mukherjee2023}, our approach specifically addresses the failure modes of residual-based PINN training through a scheduled transition between trajectory-based and physics-informed learning.
The proposed algorithm emphasizes TD learning in the initial stage to guide policy optimization and steer the critic toward a meaningful reach-avoid solution, and then gradually introduces the PDE residual and boundary-condition losses during continuation training.
This scheduling prevents the physics-informed losses from dominating before a meaningful solution has been learned, thereby mitigating the incorrect or nearly trivial solutions associated with residual-based physics-informed value-function learning (\cref{fig:1D_performance}). 
We also introduce a large language model (LLM)-guided outer-loop procedure for automatically adjusting the continuation schedule based on the outcomes of completed training runs (\cref{fig:drift_training_loss}).

The proposed PIRL is evaluated in both a one-dimensional benchmark and a nonlinear vehicle drifting problem~\cite{Hindiyeh2014}. 
The  one-dimensional example shows that the scheduled PIRL method avoids the failure modes of PINN-based policy iteration and accurately recovers the reach-avoid value function. 
The drifting example demonstrates applicability to an eight-dimensional time-augmented state space, beyond the dimensional range typically addressed by grid-based PDE solvers without special structure. 
The learned value function captures the recoverable region around the unstable drift equilibrium and is consistent with both projected phase-space structures and closed-loop stochastic rollout outcomes (\cref{fig:drift_value_contours,fig:drift_results}). 
These results demonstrate that the proposed method can recover meaningful reach-avoid structures in a challenging nonlinear control problem that cannot be fully characterized through individual low-dimensional projections.

Preliminary versions of this work partly appeared in \cite{HoshinoACC2024} and \cite{HoshinoITSC2024}.
Compared with \cite{HoshinoACC2024}, the present paper extends the framework from invariance-based maximal safety probability estimation to stochastic reach-avoid analysis of controlled dynamical systems and newly demonstrates residual-based PINN training can suffer from nearly trivial solutions. 
Compared with \cite{HoshinoITSC2024}, which used Deep Q-Network (DQN)~\cite{Mnih15}-based PIRL algorithm for vehicle drifting, the present paper develops a Twin-Delayed Deep Deterministic Policy Gradient (TD3)~\cite{Fujimoto2018:TD3}-based PIRL algorithm with a new loss scheduling strategy, and validates the resulting physics-informed value function in both
benchmark and drifting examples.

\subsection{Notation}

Let $\mathbb{R}$ and $\mathbb{R}_{+}$ be the sets of real and nonnegative real numbers, respectively, and let $\mathbb{Z}_{+}$ be the set of nonnegative integers.
For a set $A$, $A^{c}$ and $\partial A$ are its complement
and boundary, respectively.
The indicator of a set $A$ is denoted by $\mathbf{1}_{A}(x)$, which equals $1$ if $x\in A$ and $0$ otherwise.
The Euclidean norm is denoted by $\|\cdot\|$, and $\lfloor a\rfloor$ represents the greatest integer less than or equal to $a\in\mathbb{R}$.
For a random variable $Z$, $\mathbb{E}[Z]$ is its
expectation, and $\mathbb{E}[Z\mid Y=y]$ is the conditional expectation given $Y=y$.
Similarly, $\mathbb{P}(E\mid Y=y)$ denotes the conditional
probability of an event $E$ given $Y=y$.
Uppercase letters denote random variables or stochastic
processes, whereas the corresponding lowercase letters denote
their realizations.
For a sufficiently smooth scalar function $\varphi$,
$\partial_t\varphi$, $\partial_x\varphi$, and
$\partial_x^2\varphi$ are its time derivative, gradient
with respect to $x$, and Hessian with respect to $x$,
respectively.
For a matrix $M$, $\operatorname{Tr}(M)$ represents its trace.
The spaces of $k$-times continuously differentiable functions
and continuous functions are denoted by $C^k$ and $C^0$,
respectively, while $C^{1,2}$ denotes the space of functions
that are once continuously differentiable in time and twice
continuously differentiable in the state variables.

\section{Problem Statement}

This section introduces the foundations of our problem setting.
The stochastic system dynamics considered in this paper are defined in \cref{sec:dynamics}. 
The stochastic reach-avoid problem is introduced in \cref{sec:reachability}. 
The PDE characterization of the value function is presented in \cref{sec:pde}.

\subsection{System Dynamics} \label{sec:dynamics}

Let $(\Omega, \mathcal{F}, \mathbb{F}, \mathbb{P})$ be a filtered probability space where the filtration $\mathbb{F} = \{\mathcal{F}_t\}_{t\ge0}$ satisfies the usual conditions \cite{Karatzas1998}, \ie,  $\mathcal{F}_t$ is right-continuous and $\mathcal{F}_0$ contains all the $\mathbb{P}$-negligible events in $\mathcal{F}$, and let $\{W_t\}_{t\ge0}$ be a {\it w}-dimensional $\{\mathcal{F}_t\}$-standard Brownian motion with $W_0=0$.
We consider a stochastic dynamical system defined over an open domain $\Xset \subset \mathbb{R}^n$, whose state process $\{X_t\}_{t \ge 0}$ evolves according to the following stochastic differential equation (SDE):
\begin{align}
\mathrm{d}X_t = f(X_t, U_t) \mathrm{d}t + \sigma(X_t, U_t) \mathrm{d}W_t, \label{eq:sde}
\end{align}
where $f: \R^n \times \R^m \to \mathbb{R}^n$ and $\sigma: \R^n \times \mathbb{R}^m \to \mathbb{R}^{n \times w}$. 
The control process $\bm{u} = \{U_t\}_{t \ge 0}$ takes values in a compact set $\mathbb{U} \subset \mathbb{R}^m$, and is assumed to be an $\mathbb{F}$-progressively measurable process.
We denote by $\mathcal{U}$ the set of all such admissible control processes.
Throughout this paper, we assume that the functions $f$ and $\sigma$ are continuous and satisfy a uniform Lipschitz condition with respect to the first argument (state), uniformly over the second argument (control input). 
These regularity conditions ensure that the SDE \eqref{eq:sde} admits a unique strong solution~\cite[Sec.\,IV.2]{Fleming06}.
We denote this strong solution by $\{X_t^{(x,\bm{u})}\}_{t \ge 0}$ to explicitly indicate its dependence on the initial state $x$, and the control process $\bm{u}$. 


\subsection{Stochastic Reach-Avoid Problem} \label{sec:reachability}

Safety analysis of dynamical systems typically relies on two notions: \emph{reachability}, which concerns the property of a system reaching desirable states, and \emph{invariance}, which ensures that the system avoids unsafe regions. 
These two properties can be seen as dual aspects of safety, and their relationship has been extensively studied in both deterministic settings~\cite{Lygeros2004,Liao2022} and stochastic settings~\cite{Schmid2023,MohajerinEsfahani2016}.
A unified formulation that captures both aspects is the so-called \emph{reach-avoid} problem, and our treatment follows the framework developed in~\cite{MohajerinEsfahani2016}, where stochastic reach-avoid problems are formulated in terms of exit-time problems with discontinuous payoff functions.

Let $A, B \subset \Xset$ be disjoint sets, where $A$ is the target set and $B$ is the avoid set. 
For 
a fixed time horizon $T > 0$, the reach-avoid set with a threshold $p \in [0,1]$ is defined as the set of initial states from which the system can reach $A$ before entering $B$ within $[0,T]$ with probability greater than $p$:
\begin{align}
 \mathrm{RA}(A,B; p) := & 
 \Bigl\{  x \in \Xset \,\Big|\, \exists \bm{u} \in \mathcal{U} \text{ such that } 
   \notag \\ & \hspace{3mm}  
 \mathbb{P}\Bigl[\exists s \in [0, T]: X_s^{(x,\bm{u})} \in A \text{ and } 
 \notag \\ & \hspace{3mm} 
 X_r^{(x,\bm{u})} \notin B, \forall r \in [0,s] \Bigr] > p \Bigr\}.
\end{align}
The probability described above can be equivalently characterized using the exit-time formulation. In \cite{MohajerinEsfahani2016}, this problem is studied by formulating a family of subproblems starting from arbitrary initial time $t\in[0,T]$. 
For each $t$, they construct a corresponding subfiltration $\mathbb{F}_t$ and define a SDE whose solution evolves from initial state $X_t=x$ over the time interval $[t,T]$.
Following this approach, we denote by $\{ X_{s,t}^{(x,\bm{u})} \}_{s \ge t}$ the strong solution to the SDE starting at time $t$ from state $x$, under control process $\bm{u}\in  \mathcal{U}_t$, where $\mathcal{U}_t$ stands for the set of $\mathbb{F}_t$-progressively measurable maps into $\mathbb{U}$. 
Using this notation, the reach-avoid value function is defined as 
\begin{align}
V(t, x) := \sup_{\bm{u} \in \mathcal{U}_t} \mathbb{E}\left[ \mathds{1}_A \left(X_{t_\mathrm{exit},t}^{(x, \bm{u})} \right)~\middle|~ X_t=x, \bm{u} \right] \label{eq:exit_value}
\end{align}
with $t_\mathrm{exit} := \min(\tau_{A \cup B},  T)$, where $\tau_{A \cup B}$ stands for the first entry time to $A \cup B$. 
It is shown in \cite[Proposition~3.3]{MohajerinEsfahani2016} that if $A$ and $B$ are disjoint closed sets, the reach-avoid set can be expressed as the superlevel set of the value function:
\begin{align}
\mathrm{RA}(A, B; p) = \left\{ x \in \Xset \,\middle|\, V(0, x) > p \right\}. \label{eq:ra_levelset}
\end{align}
This formulation establishes a direct connection between the reach-avoid problem and stochastic optimal control. In particular, the function $V(t,x)$ can be interpreted as the maximal probability that the system reaches the target $A$ before entering the avoid set $B$ within the horizon $[t,T]$.

\begin{remark}
  Note that invariance properties requiring the state to remain in a safe set $C$ over the entire time horizon $[0,T]$ can also be expressed within the same reach-avoid structure by setting $A := C$ and $B := \Xset \setminus C$, and additionally requiring that the reach condition be satisfied at the terminal time $T$ only. This formulation allows invariance problems to be analyzed using the same value-function characterization and computational framework developed in the following sections.
  \label{remark:invariance}
\end{remark}

\subsection{PDE Characterization} \label{sec:pde}

The value function defined in \cref{eq:exit_value} plays a central role in characterizing the stochastic reach-avoid set. 
To analyze its properties and develop computational methods, we consider its characterization as the solution of a HJB partial differential equation. 
To proceed, we introduce a set of technical assumptions under which the value function is characterized as a viscosity solution to the HJB equation. 
\begin{assumption} \label{assumption:HJB}
 We stipulate that 
 \begin{enumerate}
   \item[(a)] The domain $X \subset \mathbb{R}^n$ is a bounded open set, and the set $O := \mathbb{X} \setminus (A \cup B)$ has a boundary $\partial O$ that is a manifold of class $C^3$.
   \item[(b)] The controlled process is uniformly non-degenerate on $\overline{X}$, \ie, there exists $\delta>0$ such that for all $(x,u) \in\Xset \times \Uset$ and  $\xi \in \R^n$, $\xi^\top \sigma(x,u)\sigma^\top(x,u) \xi \ge \delta |\xi|^2$. 
 \end{enumerate}
\end{assumption}
Under Assumption~\ref{assumption:HJB}, it follows from \cite[Theorem~4.7]{MohajerinEsfahani2016} that the function $V(t,x)$ is a (discontinuous) viscosity solution to the following HJB equation:
\begin{align}
 - \partial_t V(t,x) - \sup_{u \in \mathbb{U}} \mathcal{H}^u V(t,x) = 0, \quad \text{on } [0,T) \times O \label{eq:HJB}
\end{align}
with boundary and terminal conditions
\begin{align}
& V(t,x) = \mathds{1}_A(x), \notag \\ & \hspace{5mm} \text{for } (t,x) \in [0,T) \times \partial O \cup \{T\}\times \overline{O}, \label{eq:HJB_BC}
\end{align}
where the operator $\mathcal{H}^u$ is given by
\begin{align}
\mathcal{H}^u \phi(t,x) := \,& f(x,u)^\top \partial_x \phi(t,x) 
 \notag \\ & 
 + \frac{1}{2} \operatorname{Tr} \left[ \sigma(x,u)\sigma(x,u)^\top \partial_x^2 \phi(t,x) \right]. \label{eq:generator}
\end{align}

\begin{remark}
In the case of an invariance problem, the HJB equation~\eqref{eq:HJB} is recovered by choosing the domain $O=\Xset\setminus B=C$, with the boundary conditions
\begin{align}
    V(t,x) &= 0, && (t,x)\in [0,T)\times\partial C,\\
    V(T,x) &= 1, && x\in C,
\end{align}
which correspond to requiring that the system remain in the safe set throughout the time horizon $T$.  
\end{remark}

Although the viscosity solution framework provides a complete characterization of the reach-avoid value function, the discontinuous boundary condition~\eqref{eq:HJB_BC} prevents the associated HJB problem from admitting a classical solution. 
To obtain a smooth HJB problem suitable for the subsequent physics-informed learning framework, we adopt the smoothed reach-avoid formulation proposed in~\cite{MohajerinEsfahani2016}. 
Specifically, the target set $A$ is replaced by a smaller set $A_\epsilon\subset A$, defined as
\begin{align}
A_\epsilon := \left\{ x \in A \,\middle|\, \mathrm{dist}(x, A^\mathrm{c}) \ge \epsilon \right\},
\end{align}
where $\mathrm{dist}(x, A) := \inf_{y \in A} \| x - y \| $. 
This modification introduces a margin around the boundary of $A$, thereby avoiding discontinuities in the terminal condition of the value function.
For example, one can define a continuous approximation of the indicator function $\mathds{1}_A$ as follows:
\begin{align}
    l_\epsilon(x) = \max \left\{ 1 - \frac{\mathrm{dist}(x, A_\epsilon)}{\epsilon}, \, 0 \right\}, \label{eq:continuous_terminal_cost}
\end{align}
which is Lipschitz continuous and takes values in \([0,1]\). It is shown in \cite[Theorem~5.1]{MohajerinEsfahani2016} that the modified value function
\begin{align}
    V_\epsilon(t, x) := \sup_{\bm{u} \in \mathcal{U}_t} \mathbb{E}\left[ l_\epsilon \left(X_{t_\mathrm{exit},t}^{(x,\bm{u})} \right)~\middle|~ X_t=x, \bm{u} \right], 
\end{align}
with the function $ l_\epsilon(x) $ chosen as in \eqref{eq:continuous_terminal_cost} is continuous and $V(t,x) = \lim_{\epsilon\to 0} V_\epsilon(t,x)$. 
To further ensure classical solvability, we henceforth consider a sufficiently smooth approximation $l_\epsilon$ satisfying the following assumptions:
\begin{assumption} \label{assumption:smoothness}
 We stipulate that 
 \begin{enumerate}
 \item[(a$'$)]
The domain $X \subset \mathbb{R}^n$ is a bounded open set, and the set $O_\epsilon$ has a boundary $\partial O_\epsilon$ that is a manifold of class $C^3$.
   \item[(c)]  The smoothed indicator function $l_\epsilon$ is chosen such that $\mathds{1}_A(x) = \lim_{\epsilon \to 0} l_\epsilon(x)$ and  $l_\epsilon \in C^3(\overline{\Xset})$.
   Furthermore, $l_\epsilon$ satisfies necessary parabolic compatibility conditions of order 1 on the corner manifold $\{T\} \times \partial O_\epsilon$.
   \item[(d)] The functions $f$ and $a:=\sigma\sigma^\top$, together with their first and second partial derivatives with respect to the state, are continuous on $\overline X\times U$.
 \end{enumerate}
\end{assumption}
Under Assumptions~\ref{assumption:HJB} and \ref{assumption:smoothness}, it follows from standard results in stochastic control theory (e.g., \cite[Sec.\,IV.4, Theorem 4.1]{Fleming06}) that the value function $V_\epsilon$ is the unique classical solution to the associated HJB equation.
Furthermore, in this classical setting, a measurable maximizing selector
$\mu:[0,T)\times X\to U$ exists such that
\begin{align}
  \mu(t,x)
  \in \arg\max_{u\in U} H^u V_\epsilon(t,x).
\end{align}
Under the uniformly parabolic assumptions, the corresponding optimal
Markov control can be constructed as in~\cite[Sec.~IV.4]{Fleming06}.
We henceforth restrict our attention to measurable Markov policies and
denote the set of such policies by $\mathcal U_M$.

\section{RL Characterization of Reach-Avoid Problem} \label{sec:model_free_RL}

While the HJB formulation presented in the previous section provides a complete characterization of the reach-avoid value function, directly solving the HJB equation remains computationally challenging for high-dimensional systems. 
RL offers a scalable alternative by approximating value functions from sampled trajectories. 
The main result of this section is that the reach-avoid probability admits an exact RL characterization:
by reformulating the reach-avoid objective as an equivalent additive reward, the reach-avoid value function is shown to coincide exactly with the RL value function.
This equivalence provides the theoretical foundation of the proposed physics-informed RL framework.

We begin by introducing a discrete-time formulation of the original non-smooth reach-avoid problem with a constant step size $\Delta t$ under piecewise constant control processes.
For $0=t_0 < t_1 < \dots < t_k < \dots$, where $t_k := k\Delta t$, $k \in \mZ_+$, by defining the discrete-time state $X_k := X_{t_k}$ with an abuse of notation, the discretized system can be written as
\begin{align}
  X_{k+1} = F^\mu(X_k, \Delta W_k), \label{eq:descrite_system}
\end{align}
where $\Delta W_k := \{ W_t \}_{t \in [t_k, t_{k+1})}$ is a disturbance signal, and $F^{\mu}$ stands for the state transition map derived from the SDE \eqref{eq:sde} under a Markov control policy $\mu$. 
Note that using a piece-wise constant control process with a Markov policy $\mu$ implies that the control process is given as $U_t = \mu(\delta(t), X_{\delta(t)})$, for $t\in \R_+$,  where $\delta(t) := \lfloor t/\Delta t \rfloor \Delta t$, and the discretized system \eqref{eq:descrite_system} has the Markov property at the discrete times as discussed in \cite{Mao2013}. 
Then, for an arbitrarily chosen horizon length $\tau \in \R_+$, consider the reach-avoid problem where the goal is to reach the target set $A$ within $\mathcal{N}_\tau := \{0,\dots, N(\tau)\}$, where $N(\tau) := \lfloor \tau/\Delta t \rfloor$, before entering the avoid set $B$. 
Under the discretized dynamics \eqref{eq:descrite_system} and a given control policy $\mu$, the associated reach-avoid probability is defined as
\begin{align} 
 \Psi^\mu_\mathrm{RA}(\tau, x) := \,&  
    \mathbb{P} \Bigl[ \exists k \in \mathcal{N}_\tau \text{ such that }  X_k \in A \,\text{ and } 
\notag \\ & \hspace{4mm}
   X_j \notin B,\, \forall j \in \{0,\dots, k\} \,\Big|\, X_0 = x, \mu \Bigr]. \label{eq:reach_avoid_prob}
\end{align}
The objective is to compute the maximal reach-avoid probability, defined as
\begin{align} \label{eq:descrete_RA_prob}
 \Psi_\mathrm{RA}^\ast(\tau,x) := \sup_{ \mu \in \mathcal{U}_\mathrm{M} } \Psi^\mu_\mathrm{RA}(\tau,x).
\end{align}

\begin{remark}
The objective in~\eqref{eq:descrete_RA_prob} can be naturally viewed as a stochastic optimal control problem with a max-multiplicative cost structure, since the cost can be rewritten as 
\begin{align}
  \Psi^\mu_\mathrm{RA}(\tau, x)
  =
  \mathbb{E}
  \left[
      \max_{k\in\mathcal N_\tau}
          \prod_{j=0}^{k}
          \mathds{1}_{B^\mathrm c}(X_j)
      \mathds{1}_A(X_k)
      \,\middle|\,
      X_0=x,\mu
  \right].
  \label{eq:multiplicative_cost}
\end{align}
Such multiplicative formulations have been studied in~\cite{Abate2008,Summers2010}, where Bellman equations (or  dynamic programming equations) for max-multiplicative cost-to-go functions are derived for stochastic reach-avoid analysis. 
This Bellman formulation has also served as the basis for verification methods that construct conservative lower bounds of the reach-avoid probability (e.g.,~\cite{Gleason2017,Xue2021,Xue2026,Zikelic2023}).
However, these multiplicative formulations are not directly compatible with standard RL algorithms, which assume additive cumulative rewards and form the basis of temporal-difference learning.
\end{remark}

The following construction transforms the problem \eqref{eq:descrete_RA_prob} into an equivalent additive reward representation.
The key idea is to augment the state with the remaining time until the horizon is reached and to make the target and avoid sets absorbing.
Specifically, we first introduce an auxiliary variable $T_k$ that represents the remaining outlook horizon, \emph{i.e.}, 
\begin{align}
    T_0 = \tau, \quad T_{k+1} = T_{k} - \Delta t, \quad \forall k \in \mZ_+.
\end{align}
Next, consider the auxiliary dynamics
\begin{align}
    \tilde X_{k+1}
    =
    \tilde F^\mu(\tilde X_k,\Delta W_k),
\end{align}
where
\begin{align}
\tilde F^\mu(\tilde X_k,\Delta W_k)
:=
\begin{cases}
F^\mu(\tilde X_k,\Delta W_k),
&
\tilde X_k\notin A\cup B,
\\
\tilde X_k,
&
\tilde X_k\in A\cup B.
\end{cases}
\label{eq:absorbing_structure}
\end{align}
Thus, the sets $A$ and $B$ become absorbing.
The augmented state space is defined by
$\mathcal{Y} := \R \times \mathbb{X} \subset \R^{n+1}$
with augmented state
\begin{align}
Y_k=[T_k,\tilde X_k^\top]^\top.
\end{align}
The corresponding augmented dynamics are
\begin{align}
Y_{k+1}
=
G^\mu(Y_k,\Delta W_k),
 \label{eq:augmented_dynamics}
\end{align}
where
\begin{align}
G^\mu(Y_k,\Delta W_k)
=
\begin{bmatrix}
T_k-\Delta t\\
\tilde F^\mu(\tilde X_k,\Delta W_k)
\end{bmatrix}.
\end{align}
The following proposition shows that the resulting additive reward formulation is exactly equivalent to the original reach-avoid problem.

\begin{proposition} \label{prop:RL}
Consider the system \eqref{eq:augmented_dynamics} starting from an initial state $y = [\tau, x^\top]^\top \in \mathcal{Y}$ and the reward function $r: \mathcal{Y} \to \R$ given by
\begin{align} \label{eq:reward}
  r(Y_k) := \mI_{\mathcal{T}}(T_k)\, \mI_A( \tilde{X}_k ) 
\end{align}
with $\mathcal{T} := [0, \Delta t)$. 
Then, for a given control policy $\mu \in \mathcal{U}_\mathrm{M}$, the RL value function $v^\mu_\mathrm{RL}$ defined by  
\begin{align}
  &v^\mu_\mathrm{RL}(y) := \mathbb{E} \left[ \sum_{k=0}^\infty r( Y_k ) ~\middle | ~ Y_0= y, \mu  \right]  \label{eq:rl_value_function}
\end{align}   
takes a value in $[0,1]$ and is equivalent to the reach-avoid probability $\Psi^\mu_\mathrm{RA}$, i.e., 
\begin{align}
 v^\mu_\mathrm{RL}(y) = \Psi^\mu_\mathrm{RA}(\tau, x).     
\end{align}
\end{proposition}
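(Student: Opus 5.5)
The plan is a pathwise argument: I will show that, almost surely, the random sum $\sum_{k\ge0} r(Y_k)$ equals the indicator of the reach-avoid event in \eqref{eq:reach_avoid_prob}. Taking expectations then gives $v^\mu_\mathrm{RL}(y)=\Psi^\mu_\mathrm{RA}(\tau,x)$. The bound $v^\mu_\mathrm{RL}(y)\in[0,1]$ follows at once, because the sum is $\{0,1\}$-valued. This avoids any Bellman-equation argument and keeps everything at the level of sample paths under the fixed Markov policy $\mu$.

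\textbf{Step 1: the time factor selects one index.} Since $T_k=\tau-k\Delta t$, the condition $\mI_{\mathcal{T}}(T_k)=1$ means $0\le \tau-k\Delta t<\Delta t$. This holds exactly for the single index $k=N(\tau)=\lfloor \tau/\Delta t\rfloor$. Hence $\sum_{k}r(Y_k)=\mI_A(\tilde X_{N(\tau)})$. The infinite sum reduces to one term, so it is well defined and no convergence or interchange issue arises.

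\textbf{Step 2: couple the two processes.} Both processes are driven by the same disturbances $\Delta W_k$ and start from $\tilde X_0=X_0=x$. Let $\kappa:=\min\{k: X_k\in A\cup B\}$, with $\kappa=\infty$ if the set is empty. By induction using \eqref{eq:absorbing_structure}, $\tilde X_k=X_k$ for $k\le\kappa$ and $\tilde X_k=X_\kappa$ for $k\ge\kappa$. One point needs care: the policy is Markov, so $F^\mu(\cdot,\Delta W_k)$ applied at step $k$ uses the same control in both processes as long as the states agree. I will state this explicitly, since the absorbing construction only alters the dynamics after $\kappa$.

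\textbf{Step 3: identify the events.} I claim $\tilde X_{N(\tau)}\in A$ iff there is $k\le N(\tau)$ with $X_k\in A$ and $X_j\notin B$ for all $j\le k$.
\begin{itemize}
\item[($\Leftarrow$)] Take the smallest such $k$. For $j<k$, $X_j\notin A$ by minimality and $X_j\notin B$ by assumption, so $\kappa=k\le N(\tau)$ and $X_\kappa\in A$. By absorption, $\tilde X_{N(\tau)}=X_\kappa\in A$.
\item[($\Rightarrow$)] If $\kappa>N(\tau)$, then $\tilde X_{N(\tau)}=X_{N(\tau)}\notin A\cup B$, a contradiction. So $\kappa\le N(\tau)$ and $\tilde X_{N(\tau)}=X_\kappa$. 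Since $A\cap B=\emptyset$, $X_\kappa\in A$, and $X_j\notin A\cup B$ for $j<\kappa$. Thus $k=\kappa$ witnesses the event.
\end{itemize}
Taking expectations of the indicator identity gives the result.

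The main obstacle I expect is the measure-theoretic bookkeeping in Step 2. I must justify that the auxiliary process under $\mu$ is realized on the same probability space with the same increments, so that the pathwise identity holds almost surely rather than only in distribution. I would handle this by defining $\tilde X$ directly as a measurable function of $(X_k)_k$, namely $\tilde X_k=X_{k\wedge\kappa}$, and checking that it satisfies the recursion \eqref{eq:absorbing_structure}. The remaining steps are elementary.
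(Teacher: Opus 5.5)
Your proposal is correct and follows essentially the same route as the paper. The time indicator collapses the sum to $\mI_A(\tilde X_{N(\tau)})$, and the absorbing structure identifies $\{\tilde X_{N(\tau)}\in A\}$ with the reach-avoid event; your explicit coupling $\tilde X_k = X_{k\wedge\kappa}$ and direct comparison with the event in \eqref{eq:reach_avoid_prob} make rigorous the step the paper only asserts via its intermediate first-entry event $\mathcal{E}_1$.
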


\begin{proof}
    See Appendix~\ref{proof:RL}. 
\end{proof}

\begin{remark}
The proposed RL formulation can also be applied to stochastic invariance problems.
As discussed in Remark~\ref{remark:invariance}, invariance can be formulated as a reach-avoid problem by setting $A:=C$, $B:=\mathbb X\setminus C$, and requiring that the target set be reached only at the terminal time.
In the augmented dynamics, this is achieved by modifying the absorbing dynamics so that only the avoid set $B$ is treated as absorbing, while trajectories reaching the safe set $A$ continue to evolve until the remaining time reaches zero.
The resulting additive reward representation follows from the same argument as in Proposition~\ref{prop:RL}.
A detailed proof is available in our previous conference publication~\cite{HoshinoACC2024}.
\end{remark}

\begin{remark}
Proposition~\ref{prop:RL} allows the reach-avoid problem to be solved using standard RL algorithms such as Deep Q-Network (DQN)~\cite{Mnih15}, Deep Deterministic Policy Gradient (DDPG)~\cite{Lillicrap2015:DDPG}, and Twin-Delayed Deep Deterministic Policy Gradient (TD3)~\cite{Fujimoto2018:TD3}.
In practical implementations, the absorbing states corresponding to the target and avoid sets are naturally treated as terminal states of an episode.
Moreover, since the cumulative reward in~\eqref{eq:rl_value_function} takes values only in $[0,1]$ and at most one nonzero reward is accumulated along each trajectory, the discount factor can be chosen as $\gamma=1$.
\end{remark}

The previous results establish that the original non-smooth reach-avoid probability can be represented exactly as an RL value function. 
To connect this RL formulation with the HJB framework, we next consider the smoothed reach-avoid problem introduced in \cref{sec:pde}, where the target set is replaced by $A_\epsilon$ and the auxiliary dynamics are defined accordingly:
\begin{align}
  v^{\mu}_{\mathrm{RL},\epsilon}(y)
  :=
  \mathbb{E}
  \left[
      \sum_{k=0}^{\infty}
      r_\epsilon(Y_k)
      \,\middle|\,
      Y_0=y,\mu
  \right],
  \label{eq:smoothed_rl_value_fun}
\end{align}
where the reward function is given by
\begin{align}
r_\epsilon(Y_k)
:=
\mI_{\mathcal T}(T_k)\,
l_\epsilon(\tilde X_k).
\label{eq:smoothed_reward}
\end{align}
The following theorem establishes that, in the limit as $\Delta t\to0$, the smoothed RL value function converges to the classical solution of the corresponding parabolic PDE.

\begin{theorem} \label{thm:pde_characterization}
Consider the augmented system~\eqref{eq:augmented_dynamics}
starting from an initial state
$y=[\tau,x^\top]^\top\in\mathcal{Y}$,
and the smoothed RL value function $v_{\mathrm{RL},\epsilon}^\mu$
defined in \eqref{eq:smoothed_rl_value_fun}.
Suppose that Assumptions~\ref{assumption:HJB}
and~\ref{assumption:smoothness} hold.
Then the following statements hold.
\begin{enumerate}
\item[(i)]
For each Markov policy $\mu$ satisfying the regularity conditions required for classical solvability,
the limit
\begin{align}
v_\epsilon^\mu(\tau,x)
:=
\lim_{\Delta t\rightarrow0}
v_{\mathrm{RL},\epsilon}^\mu(y)
\end{align}
exists uniformly on compact subsets of
$[0,T]\times\overline{O}_\epsilon$, where $O_\epsilon:=\mathbb{X}\setminus(A_\epsilon\cup B)$. 
Moreover,
$v_\epsilon^\mu$
is the unique classical solution of the linear parabolic PDE
\begin{align}
\mathcal{L}^\mu v_\epsilon^\mu(\tau,x)=0,
\quad
(\tau,x)\in(0,T]\times O_\epsilon,
\label{eq:linear_PDE}
\end{align}
with the initial and lateral boundary conditions
\begin{align}
v_\epsilon^\mu(\tau,x)
=
l_\epsilon(x),
\quad
(\tau,x)\in
\{0\}\times\overline{O}_\epsilon
\cup
(0,T]\times\partial O_\epsilon,
\label{eq:parabolic_boundary}
\end{align}
where $\mathcal{L}^\mu$ is induced by the Markov policy $\mu$ from the controlled differential operator $\mathcal{L}^u$ defined by
\begin{align}
\mathcal{L}^u\phi(\tau,x)
:=
&
-\partial_\tau\phi(\tau,x)
+
f(x,u)^\top\partial_x\phi(\tau,x)
\notag\\
&
+
\frac12
\operatorname{Tr}
\!\left(
\sigma(x,u)\sigma(x,u)^\top
\partial_x^2\phi(\tau,x)
\right).
\label{eq:parabolic_operator}
\end{align}

\item[(ii)]
Furthermore, the optimal value function satisfies
\begin{align}
v_\epsilon^\ast(\tau,x)
=
\lim_{\Delta t\rightarrow0}
\sup_{\mu\in\mathcal{U}_{\mathrm M}}
v_{\mathrm{RL},\epsilon}^\mu(y),
\label{eq:continuous_value_func}
\end{align}
where the limit is uniform on compact subsets of
$[0,T]\times\overline{O}_\epsilon$.
Moreover,
$v_\epsilon^\ast$
is the unique classical solution of the nonlinear parabolic PDE
\begin{align}
\sup_{u\in\mathbb{U}}
\mathcal{L}^u
v_\epsilon^\ast(\tau,x)
=
0,
\qquad
(\tau,x)\in(0,T]\times O_\epsilon,
\label{eq:value_PDE}
\end{align}
with the boundary condition~\eqref{eq:parabolic_boundary}.
\end{enumerate}
\end{theorem}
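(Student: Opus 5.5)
The plan is to reduce the smoothed RL value function to the discrete-time exit-time functional, and then pass to $\Delta t\to0$ using classical parabolic theory together with a consistency argument.

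\textbf{Step 1: identify $v^\mu_{\mathrm{RL},\epsilon}$ with a discrete exit-time payoff.} I would repeat the argument of Proposition~\ref{prop:RL} with $\mI_A$ replaced by $l_\epsilon$ and $A$ by $A_\epsilon$. Because the augmented dynamics are absorbing on $A_\epsilon\cup B$, and the reward $r_\epsilon$ is nonzero only when $T_k\in[0,\Delta t)$, exactly one step $k=N(\tau)$ contributes. This gives
\begin{align}
v^\mu_{\mathrm{RL},\epsilon}(y)=\mathbb{E}\bigl[l_\epsilon(\tilde X_{N(\tau)})\,\big|\,\tilde X_0=x,\mu\bigr]
=\mathbb{E}\bigl[l_\epsilon(X^{\Delta t}_{\kappa\wedge N(\tau)})\bigr],
\end{align}
where $\kappa$ is the first grid index at which $X^{\Delta t}$ lies in $A_\epsilon\cup B$. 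I would then check that $l_\epsilon$ vanishes on $B$ (since $B$ is disjoint from $A$ and $l_\epsilon$ is supported near $A$). With that, the value is the discretely monitored analogue of $V_\epsilon$ with time-to-go $\tau$.

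\textbf{Step 2: prove part (i) for a fixed policy.} Under Assumptions~\ref{assumption:HJB}, \ref{assumption:smoothness}(a$'$), (c), (d), the standing $C^3$ boundary, uniform ellipticity, compatibility conditions, and a regular $\mu$, classical linear parabolic theory (Friedman; \cite[Sec.~IV]{Fleming06}) gives a unique solution $w\in C^{1,2}$ of \eqref{eq:linear_PDE}--\eqref{eq:parabolic_boundary}, continuous up to the boundary. Its Feynman--Kac representation is
\begin{align}
w(\tau,x)=\mathbb{E}\bigl[l_\epsilon(X_{\theta\wedge\tau})\bigr],
\end{align}
where $\theta$ is the continuous exit time from $O_\epsilon$ and the process is driven by $\mu$ with time reversed ($\tau$ is time-to-go).

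It then remains to show that $\mathbb{E}[l_\epsilon(X^{\Delta t}_{\kappa\wedge N})]\to w(\tau,x)$ uniformly on compacts. I would split the error into three parts:
\begin{enumerate}
\item[(a)] \emph{Piecewise-constant control.} The gap between the piecewise-constant control process $\mu(\delta(t),X_{\delta(t)})$ and the continuous feedback $\mu(t,X_t)$ is controlled by standard strong-error estimates for the SDE, which give $O(\Delta t^{1/2})$ pathwise in $L^2$.
\item[(b)] \emph{Discrete versus continuous monitoring.} The discretely monitored exit index $\kappa\Delta t$ must be compared with $\theta$.
\item[(c)] \emph{Horizon rounding.} The difference between $N(\tau)\Delta t$ and $\tau$ is handled by uniform continuity of $w$.
\end{enumerate}
An alternative route is to apply Itô's formula to $w$ along the discrete chain. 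Each step then incurs a local error of order $o(\Delta t)$, with constants uniform on compacts thanks to $C^{1,2}$ bounds up to the boundary. The main subtlety is steps that straddle the boundary, where the continuous path has exited but the grid sample has not.

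\textbf{Main obstacle.} The hardest point is item (b), the discrete-versus-continuous monitoring of exit. I would handle it using uniform non-degeneracy and the $C^3$ (hence exterior-sphere) boundary. These give that the continuous process, once it hits $\partial O_\epsilon$, immediately enters $(\overline{O}_\epsilon)^c$ almost surely, so $\kappa\Delta t\to\theta$ in probability. Since $l_\epsilon$ is continuous and $w=l_\epsilon$ on the lateral boundary, bounded convergence then yields convergence of the expectations. Uniformity on compact sets would follow from continuity of $\theta$ in the initial data (again a consequence of boundary regularity) combined with a Dini/equicontinuity argument.

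\textbf{Step 3: prove part (ii).} The nonlinear HJB \eqref{eq:value_PDE} has a unique classical solution $v^\ast_\epsilon$ by \cite[Thm.~IV.4.1]{Fleming06}, and $v^\ast_\epsilon$ equals $V_\epsilon$ under the time reversal $\tau=T-t$. Take the measurable maximizing selector $\mu^\ast$ from \cref{sec:pde}, smoothed if needed so that it is regular. Part (i) applied to $\mu^\ast$ gives
\begin{align}
\liminf_{\Delta t\to0}\sup_{\mu} v^\mu_{\mathrm{RL},\epsilon}\ge v^\ast_\epsilon.
\end{align}
For the reverse inequality, note that $\sup_\mu v^\mu_{\mathrm{RL},\epsilon}$ is the value of a discrete-time dynamic program. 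Its Bellman operator is monotone, stable, and consistent with $\sup_u\mathcal{L}^u$. Since the PDE has a strong comparison principle (it has a unique classical solution, and comparison holds for continuous viscosity sub- and supersolutions under uniform parabolicity), the Barles--Souganidis framework gives uniform convergence on compacts to $v^\ast_\epsilon$. The boundary data are handled as in Step 2. A simpler route for the upper bound is to observe that piecewise-constant Markov controls are admissible in $\mathcal{U}_t$, which bounds the discrete value by $V_\epsilon$ up to the monitoring error estimated in Step 2.
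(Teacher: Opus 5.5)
Your plan follows the paper's skeleton for (i) and for the lower bound in (ii). The paper likewise reduces $v^\mu_{\mathrm{RL},\epsilon}$ to $\mathbb{E}[l_\epsilon(\tilde X_{N(\tau)})]$, compares it with the continuous stopped diffusion $\hat X$, identifies the limit through the Feynman--Kac (Kolmogorov backward) representation, and gets $\liminf\ge v^\ast_\epsilon$ by applying (i) to an optimal Markov policy $\mu^\ast$. The differences are in rigor and in the upper bound of (ii).

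In (i), the paper simply cites ``standard time-discretization results for stopped diffusions'' for $\sup_K\mathbb{E}|\tilde X_{N(\tau)}-\hat X_\tau|\to0$. Your split into control discretization, grid-versus-continuous monitoring of absorption, and horizon rounding, with the exit-time continuity argument from non-degeneracy and boundary regularity, is what that citation hides. One caveat: Dini's theorem is not available for the uniformity, since nothing is monotone in $\Delta t$. You need equicontinuity uniform in $\Delta t$, or your It\^o-formula route.

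For the upper bound, the paper uses exactly your ``simpler route'' but without your caveat: admissibility of piecewise-constant feedback gives $v^\ast_{\mathrm{RL},\epsilon}(y)\le v^\ast_\epsilon(N(\tau)\Delta t,x)$, and continuity of $v^\ast_\epsilon$ finishes. Your caveat is justified. Absorption in the RL chain is checked only at grid times, so the RL payoff of a piecewise-constant control is not its continuous-time payoff. The inequality therefore holds only up to a monitoring error that must be bounded uniformly over all policies, and your fixed-policy, convergence-in-probability Step 2 does not supply that.

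Your Barles--Souganidis route avoids this problem, because monotonicity, stability and consistency of the one-step Bellman operator hold uniformly in $u\in\mathbb{U}$. It also yields both inequalities at once, which makes the $\mu^\ast$ step unnecessary. That helps, because that step, in both your version and the paper's, tacitly assumes the merely measurable selector has the regularity needed for (i). Smoothing it, as you suggest, destroys optimality unless you separately show that the smoothed policies' values converge to $v^\ast_\epsilon$. The cost of the Barles--Souganidis route is a viscosity-sense treatment of the lateral Dirichlet data and a strong comparison principle up to the boundary, both of which you only sketch.
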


\begin{proof}
    See Appendix~\ref{proof:pde_characterization}.
\end{proof}

\begin{remark}
In Theorem\,\ref{thm:pde_characterization}, \eqref{eq:continuous_value_func} establishes that the optimal value function in the discrete-time RL setting  
converges to the solution of the continuous-time PDE~\eqref{eq:value_PDE} as $\Delta t \to 0$. 
The limiting value function $v^\ast_\epsilon(\tau,x)$ corresponds to the time-reversed form of the HJB solution  
$V_\epsilon(t,x)$ introduced in \secref{sec:pde}, via the relation
\begin{align}
v^\ast_\epsilon(\tau,x) = V_\epsilon(T - \tau, x).
\end{align}
As a result, the terminal condition at $t=T$ in \eqref{eq:HJB_BC} is transformed into an initial condition in \eqref{eq:parabolic_boundary}.  
\end{remark}

\section{Proposed Physics-informed RL Framework} \label{sec:PIRL_fixed}

Based on the RL formulation developed in the previous section, this section presents the proposed PIRL algorithm.
\Cref{sec:learing_objective} introduces the hybrid TD and PINN learning objective.
\Cref{sec:training_algorithm} presents the scheduled PIRL training algorithm.
Finally, Section~IV-C discusses the approximation guarantee of the learned value function.

\subsection{Hybrid TD-PINN Learning}
\label{sec:learing_objective}

Rather than solving the HJB equation using PINN-based policy iteration~\cite{Meng2024,Wang2024PIRL}, the proposed framework combines trajectory-based RL with PDE-constrained value-function approximation.
The key idea is to optimize the value function using both trajectory samples collected through interaction with the environment and collocation points sampled from the state space.
The former provides temporal-difference (TD) supervision not only for policy learning but also for guiding the optimization toward meaningful value-function approximations, whereas the latter enforces consistency with the governing PDE and its boundary conditions.
As demonstrated in \cref{sec:simulaion}, this trajectory-based supervision helps avoid undesirable failure modes that may arise when the PDE residual is minimized alone.

The proposed framework is applicable to a broad class of actor-critic algorithms.
The actor is responsible for policy improvement through interaction with the environment, whereas the critic approximates the reach-avoid value function.
In this work, we employ TD3 as a representative deterministic actor-critic algorithm, although the proposed learning framework is not restricted to this particular algorithm.
Let $Q_\theta(y,u)$ represent the critic network with parameters $\theta$, where the augmented state is given by $y=[\tau,x^\top]^\top$.
Following the TD3 framework, two critic networks $Q_{\theta_1}$ and $Q_{\theta_2}$ are maintained to mitigate overestimation bias. 
Each critic is trained using the temporal-difference loss: for $i = 1,\,2$,
\begin{align}
L_{\mathrm{TD},i}
=
\mathbb{E}{(y,u,r,y')}
\left[
\left(
Q_{\theta_i}(y,u)-Q_{\mathrm{target}}
\right)^2
\right],
\end{align}
and the target value is given by
\begin{align}
Q_{\mathrm{target}}
=
r
+(1-d)\gamma
\min_{i=1,2}
Q_{\theta_i^-}
\left(
y',
\pi_{\phi^-}(y')+\epsilon_\mathrm{n}
\right),
\end{align}
where $d$ is the terminal indicator, the networks $\pi_{\phi^-}$ and $Q_{\theta_i^-}$ denote the target actor and target critics, respectively, and $\epsilon_\mathrm{n}$ denotes the clipped noise added to the target action for target policy smoothing in TD3.

To enforce consistency with the governing PDE, collocation states $\{y^{(j)}\}_{j=1}^{N_{\mathrm{PDE}}}$ are sampled from a prescribed
distribution over the state domain. 
For each collocation state, the action used to evaluate the PDE residual is generated by perturbing the output of the target actor as
\begin{align}
u^{(j)}
=
\pi_{\phi^-}(y^{(j)})+\epsilon_{\mathrm{pde}}^{(j)},
\end{align}
where $\epsilon_{\mathrm{pde}}^{(j)}$ denotes clipped noise, analogous to target policy smoothing in TD3, introduced to stabilize the physics-informed critic update.
The corresponding PDE loss for each critic is defined as
\begin{align}
L_{\mathrm{PDE},i}
=
\frac{1}{N_{\mathrm{PDE}}}
\sum_{j=1}^{N_{\mathrm{PDE}}}
\left(
\mathcal{L}^{u^{(j)}}
Q_{\theta_i}(y^{(j)}, u^{(j)})
\right)^2,
\end{align}
where $\mathcal{L}^u$ stands for the differential operator defined in~\eqref{eq:parabolic_operator}.
In practice, the action components of the collocation points may be chosen from the current actor or from a neighborhood of the actor output, which improves the local consistency of the critic around the policy being learned.
In addition, the parabolic boundary condition of the smoothed
reach-avoid problem is enforced by introducing the boundary loss
\begin{align}
 L_{\mathrm{BC},i} 
=
\frac{1}{N_{\mathrm{BC}}}
\sum_{j=1}^{N_{\mathrm{BC}}}
\left(
 Q_{\theta_i}
\left(
y_{\mathrm{BC}}^{(j)},
\pi_{\phi^-}(y_{\mathrm{BC}}^{(j)})
\right)
-
l_\epsilon(x_{\mathrm{BC}}^{(j)})
\right)^2,
\end{align}
where
$y_{\mathrm{BC}}^{(j)}=(\tau_{\mathrm{BC}}^{(j)},x_{\mathrm{BC}}^{(j)})$
is sampled from the parabolic boundary
$\{0\}\times O_\epsilon\cup(0,T]\times\partial O_\epsilon$.

Consequently, the critic is trained by minimizing the weighted sum of these three objectives,
\begin{align}
L_{\mathrm{critic},i}
=
w_{\mathrm{TD}}
L_{\mathrm{TD},i}
+
w_{\mathrm{PDE}}
L_{\mathrm{PDE},i}
+
w_{\mathrm{BC}}
L_{\mathrm{BC},i}.
\label{eq:critic_loss}
\end{align}
The actor is updated according to the policy optimization step of the underlying RL algorithm, and the actor loss is given by
\begin{align}
L_{\mathrm{actor}}
=
-
\mathbb{E}_{y}
\left[
 Q_{\theta_1}
\left(
y,
\pi_\phi(y)
\right)
\right].
\label{eq:actor_loss}
\end{align}

\begin{algorithm}[t]
\caption{Scheduled Physics-Informed RL}
\label{alg:pirl}
\begin{algorithmic}[1]
\STATE \textbf{Input:} Initial actor $\pi_\phi$, critics $Q_{\theta_1},Q_{\theta_2}$, replay buffer $\mathcal{B}$, scheduling parameters $\{w_i^{\mathrm{init}},w_i^{\mathrm{final}},c,\alpha\}$, policy update frequency $d_\mathrm{up}$
\STATE Initialize target networks $\pi_{\phi^-}$, $Q_{\theta_1^-}$, $Q_{\theta_2^-}$
\FOR{$k=1,\ldots,K$}
    \STATE Receive trajectory samples from parallel rollout workers and store them in $\mathcal{B}$
    \STATE Sample a minibatch of transitions from $\mathcal{B}$
    \STATE Sample collocation points for the PDE residual and boundary losses
    \STATE Compute $L_{\mathrm{TD}}$, $L_{\mathrm{PDE}}$, and $L_{\mathrm{BC}}$
    \STATE Update critics by minimizing $L_{\mathrm{critic}}$ in~\eqref{eq:critic_loss}
    \IF{$k \equiv 0 \pmod {d_\mathrm{up}} $}
        \STATE Update actor using $L_{\mathrm{actor}}$ in~\eqref{eq:actor_loss}
        \STATE Update target networks
    \ENDIF
    \STATE Update $w_{\mathrm{TD}}(k)$, $w_{\mathrm{PDE}}(k)$, and $w_{\mathrm{BC}}(k)$
\ENDFOR
\end{algorithmic}
\end{algorithm}

\subsection{Scheduled Training Algorithm}
\label{sec:training_algorithm}

The hybrid objective in \eqref{eq:critic_loss} introduces a trade-off between trajectory-based RL and PDE-constrained value-function approximation.
At the early stage of training, the critic is typically inaccurate, making the trajectory-based TD supervision essential for learning an effective policy and obtaining a coarse approximation of the value function.
As training progresses, however, the PDE residual and boundary losses become increasingly important for improving the accuracy and physical consistency of the learned value function.
Motivated by this observation, we gradually shift the learning objective from RL-based supervision to PDE-based supervision during training.
Specifically, the loss weights are updated according to the sigmoid scheduling function
\begin{align}
\eta(k)
=
\frac{1}
{1+\exp\!\left[-\alpha(k-c)\right]},
\end{align}
where $k$ stands for the training iteration, $c$ is the center of the transition, and $\alpha$ determines its steepness.
The weight associated with each loss component is then given by
\begin{align}
w_i(k)
=
w_i^{\mathrm{init}}
+
\left(
w_i^{\mathrm{final}}
-
w_i^{\mathrm{init}}
\right)
\eta(k),
 \label{eq:wight_scheduling}
\end{align}
where $i\in\{\mathrm{TD},\mathrm{PDE},\mathrm{BC}\}$.
The parameters 
$w_i^{\mathrm{init}}$,
$w_i^{\mathrm{final}}$,
$c$, and
$\alpha$
determine the transition between trajectory-based and PDE-based supervision.
These parameters may be selected manually or determined automatically using an external hyperparameter optimization procedure.
In this work, the scheduling parameters are automatically adapted through an LLM-guided iterative search, as demonstrated in the drift example in \cref{sec:simulaion}.

The overall training procedure is summarized in Algorithm~\ref{alg:pirl}.
Trajectory samples are collected asynchronously by parallel rollout workers and accumulated in a replay buffer, while the learner updates the actor and critic networks independently.
At each learning iteration, a minibatch of transitions is sampled from the replay buffer to evaluate the TD loss, and collocation points are independently sampled from the state-time domain to evaluate the PDE residual and boundary losses.
The critic is updated by minimizing the hybrid objective in~\eqref{eq:critic_loss}.
Following the underlying actor-critic algorithm, the actor and target networks are updated only once every $d_\mathrm{up}$ critic updates.
Finally, the loss weights are updated according to the scheduling rule described above, and the resulting weights are used in the next learning iteration.

\begin{remark}
Condition \eqref{eq:value_PDE} corresponds to the Bellman optimality equation, where purely sample-based evaluations are known to induce overestimation bias under function approximation. While our framework mitigates this bias using TD3 and PDE residual regularization, other existing techniques can be modularly integrated during the initial phase. For instance, one could smoothly interpolate between the expectation and max operators~\cite{omura2025gradual}, building on the expectile parameter of Implicit Q-Learning~\cite{kostrikov2022iql}, or via soft non-expansion operators such as Mellowmax~\cite{asadi2017alternative}. 
\end{remark}

\subsection{Approximation Guarantee}
\label{sec:guarantee}

A key advantage of incorporating physics-informed constraints into the critic network is that the approximation accuracy can be related to the governing PDE.
Recall from Theorem~\ref{thm:pde_characterization} that the
optimal smoothed reach-avoid value function $v_\epsilon^*$ is
the unique classical solution of the PDE~\eqref{eq:value_PDE} with the boundary condition
\eqref{eq:parabolic_boundary}.
For notational simplicity, define the parabolic domain and its
boundary by
\begin{align}
\mathcal D_\epsilon
&:=
(0,T)\times O_\epsilon,
\\
\partial_p\mathcal D_\epsilon
&:=
\{0\}\times O_\epsilon
\cup
(0,T]\times\partial O_\epsilon .
\end{align}
Let $v_\theta(\tau,x)$ be the approximated value function represented by the critic and actor networks after training.
Applying the PINN error analysis of \cite{Wang2026} to the policy-evaluation PDE associated with the learned policy yields the following approximation-error bound.

\begin{corollary} \label{thm:error_bound}
Suppose $v_\theta
\in
C^{1,2}(\mathcal D_\epsilon)
\cap C^0(\overline{\mathcal D}_\epsilon)$, and let
$Z_{\rm int}$ and $Z_{\rm bd}$ be uniformly distributed over
$\mathcal D_\epsilon$ and $\partial_p\mathcal D_\epsilon$,
respectively.
Assume that
\begin{align}
\mathbb E
\left[
\left|
v_\theta(Z_{\rm bd})-l_\epsilon(x_{\rm bd})
\right|
\right]
&\le
\delta_{\rm BC},
\\
\mathbb E
\left[
\left|
\mathcal L^{\mu_\phi}v_\theta(Z_{\rm int})
\right|
\right]
&\le
\delta_{\rm PDE},
\end{align}
where $Z_{\rm bd}=(\tau_{\rm bd},x_{\rm bd})$.
Assume further that
$v_\theta$ and
$\mathcal L^{\mu_\phi}v_\theta$
are Lipschitz continuous on
$\overline{\mathcal D}_\epsilon$ and
$\mathcal D_\epsilon$, respectively.
Then there exists a constant $C>0$, depending only on the
geometry of the domain and the regularity of the parabolic
operator, such that
\begin{align}
v_\theta(\tau,x)
\le
v_\epsilon^*(\tau,x)
+
\widetilde\delta_{\rm BC}
+
C\widetilde\delta_{\rm PDE},
\quad
(\tau,x)\in\mathcal D_\epsilon,
\label{eq:error_upper_bound}
\end{align}
where $\widetilde\delta_{\rm BC}$ and
$\widetilde\delta_{\rm PDE}$ are constants obtained from
$\delta_{\rm BC}$ and $\delta_{\rm PDE}$ through
$L^1$-to-$L^\infty$ estimate for Lipschitz functions under
uniform sampling.
Furthermore, if the learned policy is optimal, i.e.,
$\mu_\phi=\mu^*$, then
\begin{align}
\left|
v_\theta(\tau,x)
-
v_\epsilon^*(\tau,x)
\right|
\le
\widetilde\delta_{\rm BC}
+
C\widetilde\delta_{\rm PDE},
\quad
(\tau,x)\in\mathcal D_\epsilon .
\label{eq:two_side_bound}
\end{align}
\end{corollary}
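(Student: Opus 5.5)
The plan has three steps: reduce the averaged hypotheses to pointwise ones, compare $v_\theta$ with the policy value by the maximum principle, and then pass to the optimal value.

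\textbf{Step 1: from averages to sup-norm bounds.} Since $v_\theta$ is Lipschitz on $\overline{\mathcal D}_\epsilon$ and $l_\epsilon\in C^3$, the boundary mismatch $g:=v_\theta-l_\epsilon$ is Lipschitz on $\partial_p\mathcal D_\epsilon$; call its constant $L_g$. Suppose $|g(z_0)|=M$ at some boundary point $z_0$. Then $|g|\ge M/2$ on the boundary ball of radius $M/(2L_g)$ around $z_0$. That ball has surface measure at least $c\,(M/L_g)^{n}$, where $c$ depends on the geometry of $\partial_p\mathcal D_\epsilon$ (uniform cone or interior-ball property, which follows from the $C^3$ boundary). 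Hence
\begin{align}
\delta_{\rm BC}\ge \mathbb E|g(Z_{\rm bd})|\gtrsim M^{n+1}/(L_g^{n}|\partial_p\mathcal D_\epsilon|),
\end{align}
which gives $\|g\|_\infty\le\widetilde\delta_{\rm BC}:=C_1(\delta_{\rm BC})^{1/(n+1)}$. The same argument on the $(n+1)$-dimensional interior, using the Lipschitz property of $\mathcal L^{\mu_\phi}v_\theta$, gives $\|\mathcal L^{\mu_\phi}v_\theta\|_{L^\infty(\mathcal D_\epsilon)}\le\widetilde\delta_{\rm PDE}$. This is the $L^1$-to-$L^\infty$ estimate mentioned in the statement; I would cite it from \cite{Wang2026}.

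\textbf{Step 2: compare $v_\theta$ with the policy value.} By Theorem~\ref{thm:pde_characterization}(i), $v^{\mu_\phi}_\epsilon$ is the classical solution of $\mathcal L^{\mu_\phi}v=0$ with boundary data $l_\epsilon$ on $\partial_p\mathcal D_\epsilon$. Set $e:=v_\theta-v_\epsilon^{\mu_\phi}$. Then
\begin{align}
|\mathcal L^{\mu_\phi}e|\le\widetilde\delta_{\rm PDE} \text{ in } \mathcal D_\epsilon,\qquad |e|\le\widetilde\delta_{\rm BC} \text{ on } \partial_p\mathcal D_\epsilon .
\end{align}
I would apply the parabolic maximum principle with a barrier. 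Let $w(\tau,x):=\widetilde\delta_{\rm BC}+\widetilde\delta_{\rm PDE}\,\psi(\tau,x)$ with $\psi:=\tau$. Since the operator is $-\partial_\tau+\dots$, we get $\mathcal L^{\mu_\phi}\psi=-1$, so $\mathcal L^{\mu_\phi}w=-\widetilde\delta_{\rm PDE}\le\mathcal L^{\mu_\phi}e$. Hence $\mathcal L^{\mu_\phi}(e-w)\ge0$ in the interior and $e-w\le0$ on $\partial_p\mathcal D_\epsilon$. The maximum principle for $\mathcal L^{\mu_\phi}$ then gives $e\le w$ with $C=T$. The same argument applied to $-e$ gives the lower bound, so
\begin{align}
|v_\theta-v^{\mu_\phi}_\epsilon|\le\widetilde\delta_{\rm BC}+C\widetilde\delta_{\rm PDE}.
\end{align}
More generally, $C$ can be taken as the sup of the solution of $\mathcal L^{\mu_\phi}\psi=-1$ with zero parabolic boundary data. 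Uniform non-degeneracy (Assumption~\ref{assumption:HJB}(b)) keeps this well posed, and the time-barrier shows that $C$ depends only on the domain and on the operator.

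\textbf{Step 3: pass to the optimal value.} Since $\mu_\phi\in\mathcal U_{\rm M}$, Theorem~\ref{thm:pde_characterization}(ii) gives $v^{\mu_\phi}_\epsilon\le v^*_\epsilon$. Alternatively, this follows from comparison: $\sup_u\mathcal L^uv^*_\epsilon=0$ implies $\mathcal L^{\mu_\phi}v^*_\epsilon\le0$, the two functions share the same boundary data, and the maximum principle applies. Combining this with the upper half of Step 2 yields \eqref{eq:error_upper_bound}. When $\mu_\phi=\mu^*$, the policy value coincides with $v^*_\epsilon$, and the two-sided estimate of Step 2 gives \eqref{eq:two_side_bound}.

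\textbf{Main obstacle.} I expect the hardest part to be the regularity needed to apply the maximum principle. Theorem~\ref{thm:pde_characterization}(i) only guarantees a classical $v^{\mu_\phi}_\epsilon$ for sufficiently regular policies, so I would have to assume, or inherit from the network, that $\mu_\phi$ is smooth enough for this. A second delicate point is Step~1 near the corner $\{0\}\times\partial O_\epsilon$. There I would use the uniform cone condition to get a lower bound on the measure of small boundary balls.
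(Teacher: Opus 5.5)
Your proposal is correct and follows the same route as the paper's proof. Both arguments use an $L^1$-to-$L^\infty$ reduction, the comparison principle for the policy-evaluation PDE of $\mu_\phi$, the inequality $v_\epsilon^{\mu_\phi}\le v_\epsilon^*$, and the two-sided estimate when $\mu_\phi=\mu^*$; your barrier $\psi=\tau$ additionally makes the constant explicit ($C=T$).

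As a side remark, you can avoid the regularity obstacle you flag by comparing $v_\theta$ directly with the classical solution $v_\epsilon^*$: since $\mathcal L^{\mu_\phi}v_\epsilon^*\le 0$, and the weak maximum principle for classical subsolutions needs no smoothness of $\mu_\phi$, you never need $v_\epsilon^{\mu_\phi}$ to be a classical solution.
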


\begin{proof}
    See Appendix~\ref{proof:error_bound}.
\end{proof}

\begin{remark}
The inequality \eqref{eq:error_upper_bound} bounds the learned value relative to the optimal reach-avoid value $v_\epsilon^*$.
A complementary two-sided guarantee can be obtained for the value of the learned policy itself.
Let $v_\epsilon^{\mu_\phi}$ denote the reach-avoid value function under $\mu_\phi$.
By applying the PINN error estimate in~\cite{Wang2026} to the linear policy-evaluation PDE associated with $\mu_\phi$, the assumptions in Corollary~\ref{thm:error_bound} imply
\begin{align}
\left|v_\theta(\tau,x)-v_\epsilon^{\mu_\phi}(\tau,x)\right|
\leq
\widetilde{\delta}_{\mathrm{BC}}
+C\widetilde{\delta}_{\mathrm{PDE}},
\quad (\tau,x)\in D_\epsilon.
\end{align}
Thus, even when the learned policy is suboptimal, sufficiently small
PDE residual and boundary mismatch guarantee that the critic
accurately approximates the reach-avoid probability achieved by
that policy.
\end{remark}

The above corollary provides a theoretical justification for incorporating the PDE residual into critic training.
This approximation guarantee is unavailable in standard RL, where the critic is supervised only through temporal-difference targets.
Note that the TD loss introduced in the proposed PIRL framework is not required for the above error bound.
Instead, its role is to improve the optimization process by providing trajectory-based supervision, thereby mitigating poor local minima that may satisfy the PDE residual approximately while failing to recover the correct value function.

\section{Numerical experiments}
\label{sec:simulaion}

This section evaluates the proposed PIRL framework through
numerical experiments. 
We first consider a one-dimensional stochastic reachability
problem in \cref{sec:1d}, for which an analytical solution is
available. This example enables a direct comparison of the
learned value functions and is used to investigate the learning
behavior of TD3, PINN-based policy iteration, and the proposed
PIRL. 
We then consider a reach-avoid control problem for vehicle
drifting in \cref{sec:drift}, which demonstrates the applicability
of the proposed method to a higher-dimensional and practically
challenging control problem.
The source code and numerical results are publicly available at
\url{https://github.com/hoshino06/ProbReachPIRL}.


\subsection{1D Reachability Problem}
\label{sec:1d}

We first consider a one-dimensional stochastic reachability problem to illustrate the learning behavior of the proposed method in a setting where the ground-truth solution is available.
The system dynamics are given by
\begin{align}
\mathrm{d}X_t = U_t \mathrm{d}t + \mathrm{d}W_t,
\end{align}
where $U_t \in [-1,1]$ is the control input and $W_t$ is a standard Brownian motion. 
We consider a finite-horizon reachability problem with a time horizon of $T=2$, in which the target set is defined as
\begin{align}
A = \{x \in \mathbb{R} \mid x \ge 2\}.
\end{align}
No avoid set is introduced in this example, and therefore the problem reduces to a pure reachability problem. Since moving toward larger values of $x$ always increases the probability of reaching the target, the optimal policy is simply $u^\ast(x)=1$ for all states. 
The corresponding reachability probability can be obtained analytically \cite{chern21}: 
\begin{align}
v(\tau, x)
&=\mathbb{P}(\tau_A\le \tau\mid X_0=x),   
\end{align}
where $\tau_A=\inf\{t\ge0\mid X_t\ge2\}$ is the first hitting time of the target set.

The accuracy of the proposed PIRL framework was evaluated by comparing the learned value function with the analytical solution.
The actor had two hidden layers with 32 neurons each, while the critics had three hidden layers with 32 neurons each.
The networks were trained using the Adam optimizer with a learning rate of $10^{-4}$.
For computing the PDE residual and boundary losses, 1000 interior collocation points and 100 points on each boundary were sampled at every update. 
The loss weights were scheduled according to \cref{eq:wight_scheduling}, gradually transitioning from $(w_{\mathrm{TD}},w_{\mathrm{HJB}},w_{\mathrm{BC}})=(1,0,0)$ to $(0,1,1)$ with $c=5\times10^4$ and $\alpha=10^{-4}$.
\Cref{fig:1D_learned_surface} compares the value function obtained after $2\times10^5$ update steps with the analytical solution.
The learned value function closely matches the ground-truth solution over the entire state-time domain except in the vicinity of the singularity of the ground truth at $(t,x)=(0,2)$. 
The heatmap of the absolute error further confirms that the approximation error remains small throughout the domain away from this singular region. These results verify that the proposed framework successfully learns both the reachability probability and the associated optimal policy.

\begin{figure}
    \centering
    \begin{minipage}[t]{0.45\linewidth}
        \centering
        \includegraphics[width=\linewidth]{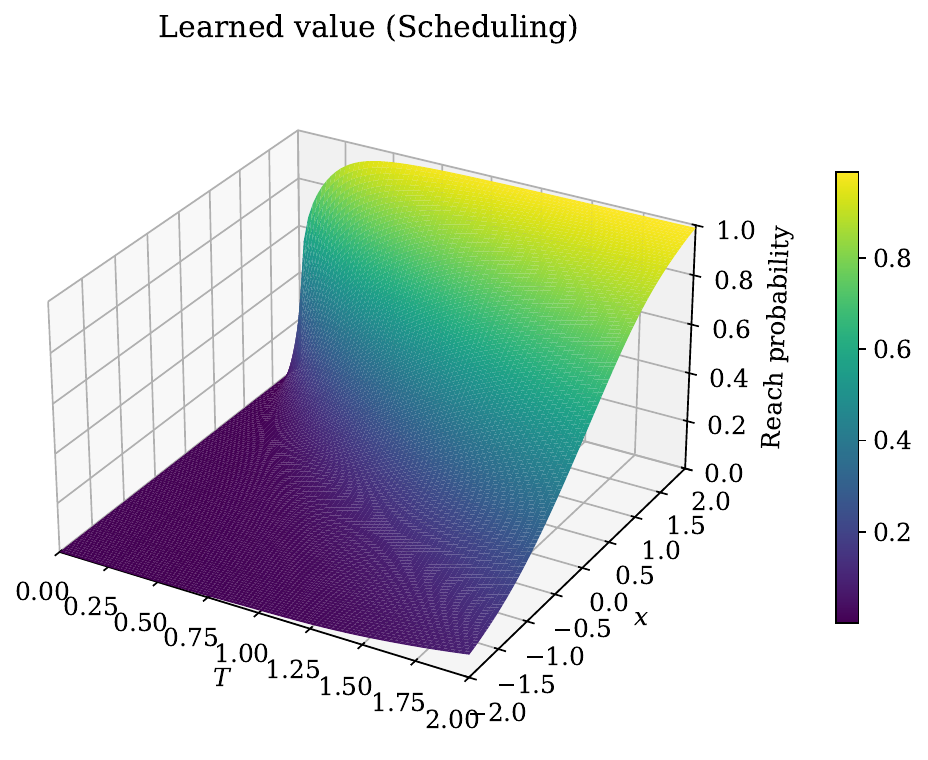}
        \subcaption{Learned value function}\label{fig:dc_coupled_colocated}
    \end{minipage}\hfill
    \begin{minipage}[t]{0.45\linewidth}
        \centering
        \includegraphics[width=\linewidth]{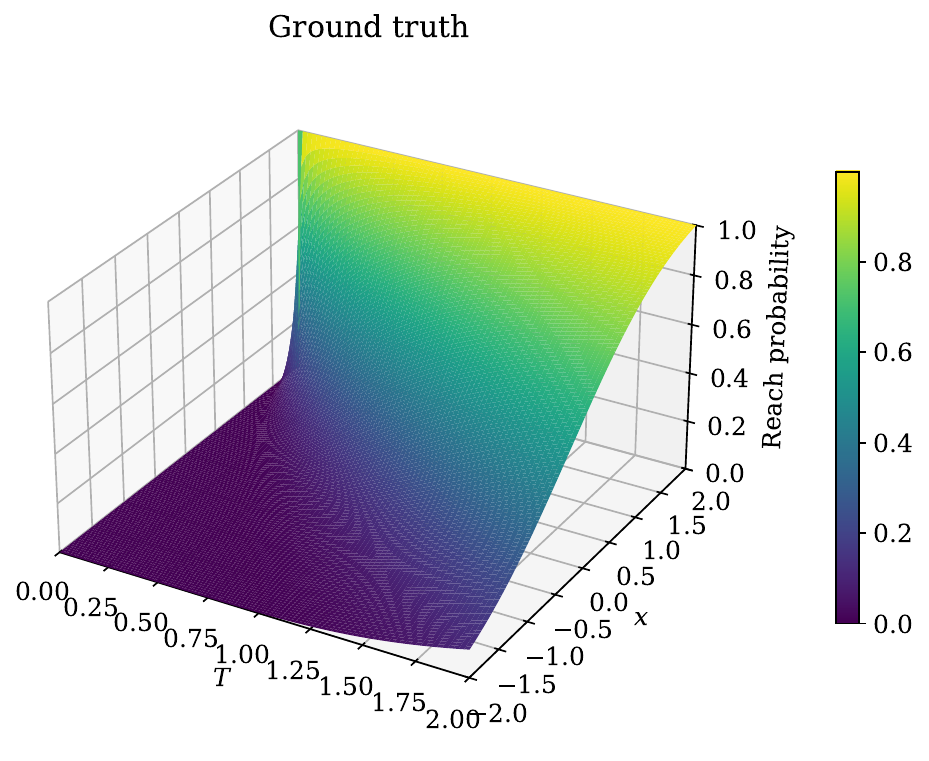}
        \subcaption{Ground truth}
        \label{fig:surface_ground_truth}
    \end{minipage}\\[2mm]
    \begin{minipage}[t]{1.0\linewidth}
    \centering
    \includegraphics[width=\linewidth]{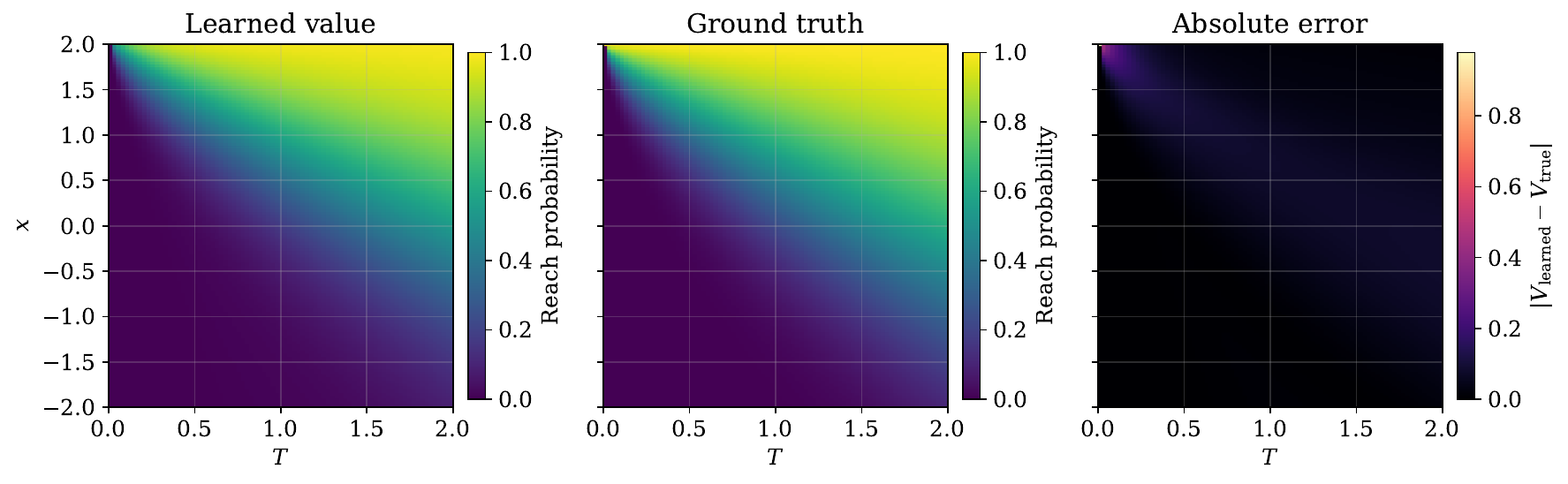}
    \subcaption{Heatmaps of learned value, ground truth, and absolute error}
    \label{fig:surface_ground_truth}
    \end{minipage}
    \caption{Learned value function for 1D reachability problem}
    \label{fig:1D_learned_surface}
\end{figure}

To further investigate the learning characteristics of TD3, PINN-based policy iteration (PINN-PI)~\cite{Meng2024,Wang2024PIRL}, and the proposed PIRL, they were evaluated over ten random seeds. 
\Cref{fig:1D_performance} compares the control performance and value-function accuracy obtained by the three approaches.
As shown in \cref{fig:1D_reward}, TD3 successfully learns the task for all ten runs.
By contrast, PINN-PI exhibited a pronounced failure mode: four runs achieved rewards comparable to those of TD3, whereas the remaining six converged to solutions with substantially lower rewards.
The proposed PIRL achieved near-optimal rewards in all ten runs, indicating that the inclusion of trajectory-based TD learning prevents the collapse observed in PINN-PI.
The mean-squared errors with respect to the analytical value function are shown in \cref{fig:1D_mse}.
Although TD3 reliably learned the optimal policy, its value-function error was larger than those of the successful PINN-PI and PIRL runs.
The proposed PIRL achieved an error comparable to that of successful PINN-PI while retaining the reliable policy learning observed in TD3.
This relationship is summarized more directly in \cref{fig:1D_reward_mse}, where each small marker represents an individual run and each large marker represents the corresponding group average.
The proposed PIRL consistently occupies the lower-right region, demonstrating that it combines the policy-learning robustness of TD3 with the value-function accuracy of successful PINN-PI training.

\begin{figure}[t]
    \centering
    \begin{minipage}[t]{0.49\linewidth}
        \centering
        \includegraphics[width=0.8\linewidth]{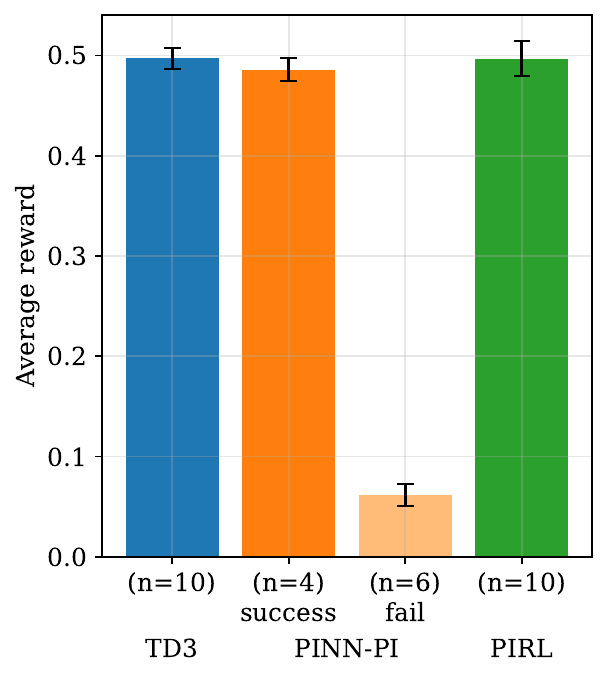}
        \subcaption{Average rewards}
        \label{fig:1D_reward}
    \end{minipage}
    \hfill
    \begin{minipage}[t]{0.49\linewidth}
        \centering
        \includegraphics[width=0.8\linewidth]{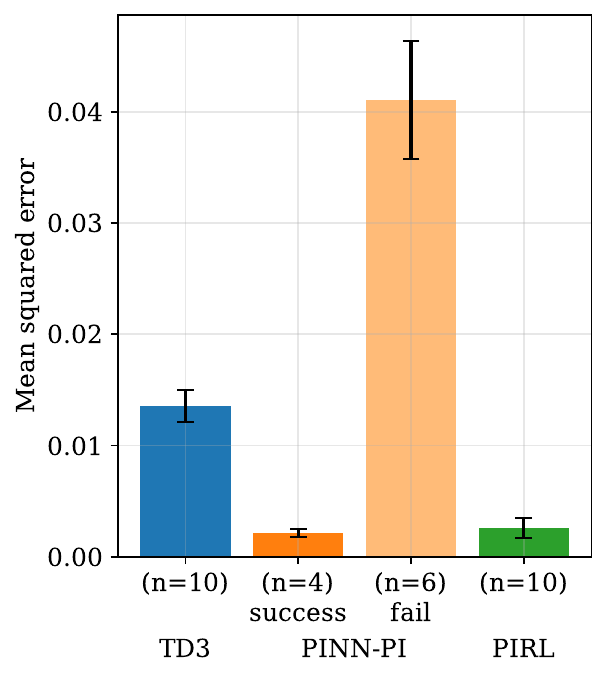}
        \subcaption{Mean-squared errors}
        \label{fig:1D_mse}
    \end{minipage}\\[2mm]
    \begin{minipage}[t]{\linewidth}
        \centering
        \includegraphics[width=0.6\linewidth]{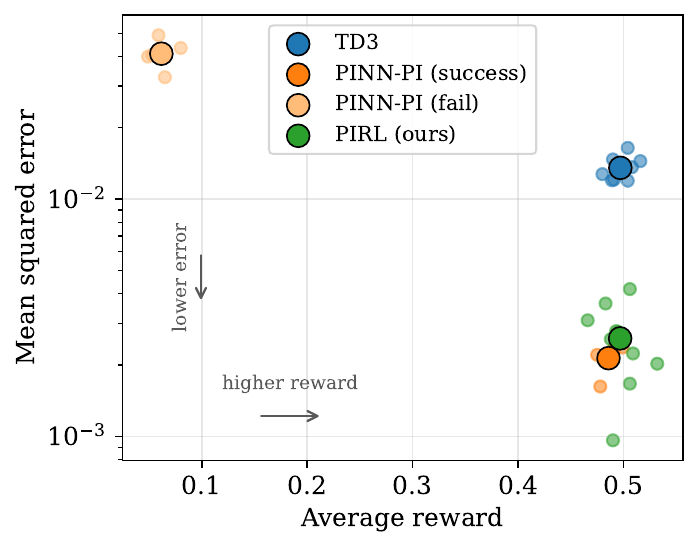}
        \subcaption{Average reward vs. mean-squared error}
        \label{fig:1D_reward_mse}
    \end{minipage}
    \caption{Comparison of performance over ten random seeds}
    \label{fig:1D_performance}
\end{figure}

\begin{figure}[t]
    \centering
    \includegraphics[width=0.9\linewidth]{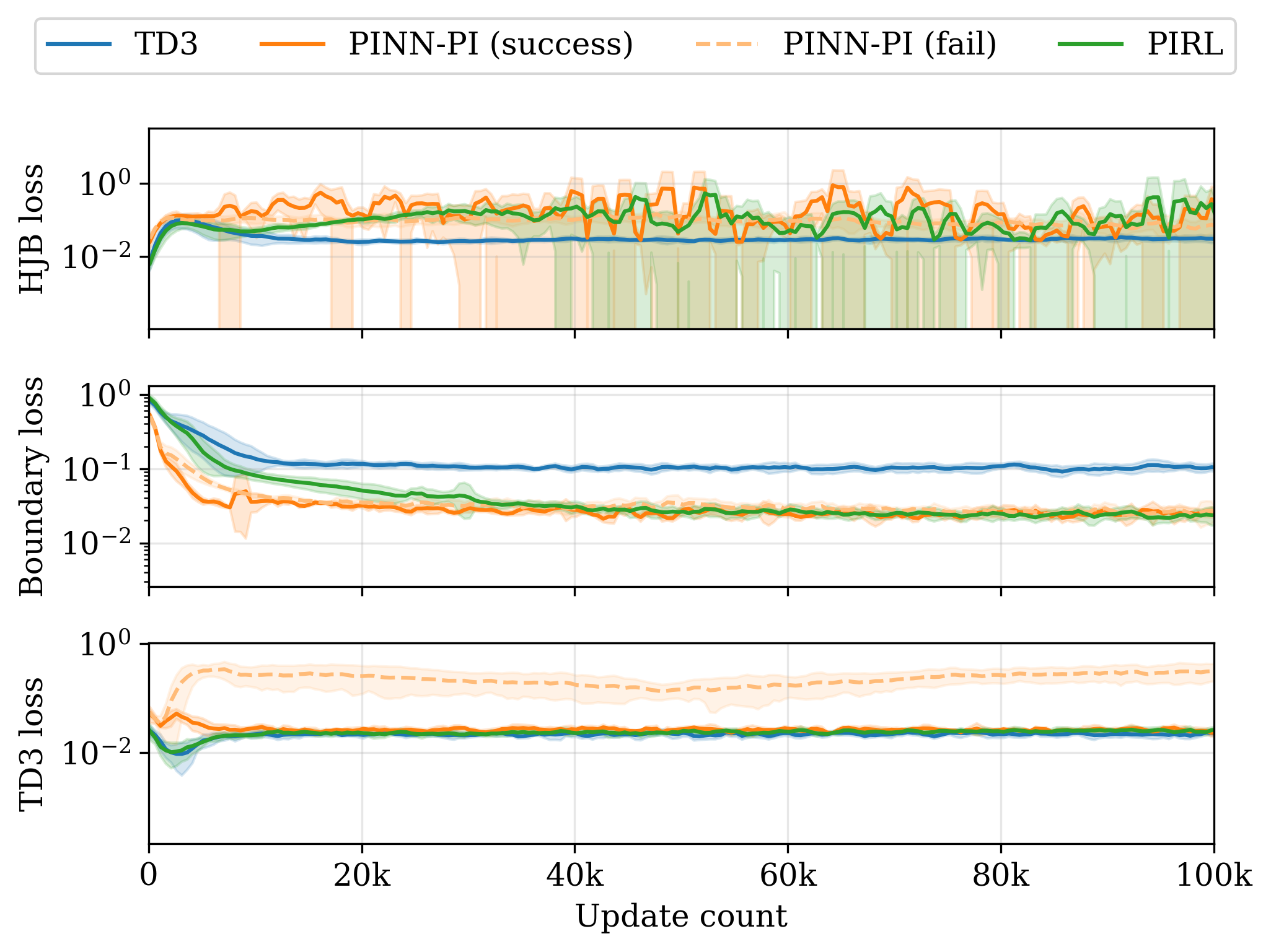}
    \caption{Evolution of losses during training}
    \label{fig:1D_loss_terms}
\end{figure}

To examine the mechanism underlying these differences, \cref{fig:1D_loss_terms} compares the evolution of the HJB, boundary, and TD losses.
The HJB losses of the failed PINN-PI runs are not larger than those of the successful runs.
Similarly, their boundary losses decrease to levels comparable to those obtained in successful PINN-PI training.
Thus, small physics-informed losses do not necessarily imply that the learned value function is meaningful.
PDE residual-based optimization can instead converge to a nearly trivial value function that approximately satisfies the sampled HJB residual and boundary conditions while failing to recover the correct reachability structure.
The distinction becomes clearer in the TD loss.
When all trained networks are evaluated using trajectory-based TD targets, the failed PINN-PI runs exhibit a substantially larger TD loss than the other groups.
Their critics are therefore inconsistent with the returns observed along sampled trajectories, even though their HJB and boundary losses are not large.
The TD objective provides complementary information that distinguishes these inaccurate solutions and guides learning toward a value function associated with an effective policy.
By first using TD learning to establish the coarse reachability structure and then introducing the physics-informed losses, the proposed scheduling strategy avoids the failure mode of PINN-PI while reducing the value-function error of pure TD3.

\subsection{Reach-Avoid Control of Vehicle Drifting}
\label{sec:drift}

\subsubsection{Problem setup}

We next consider a stochastic reach-avoid control problem for autonomous vehicle drifting.
The vehicle is modeled by a nonlinear path-relative dynamic bicycle model adapted from~\cite{Hindiyeh2014}, which describes both the vehicle motion and its evolution relative to a reference path. 
The lateral tire forces are represented by a nonlinear Fiala tire model, including the reduction in the rear lateral-force capacity caused by the longitudinal driving force.
The state is given by
\begin{equation}
x =
\begin{bmatrix}
e_y & e_\psi & v_x & v_y & r & \delta & \mu
\end{bmatrix}^{\top},
\end{equation}
where $e_y$ and $e_\psi$ stand for the lateral position and heading errors with respect to the reference path, $v_x$ and $v_y$ are the longitudinal and lateral velocities, $r$ is the yaw rate, $\delta$ is the steering angle, and $\mu$ is the road friction coefficient.
The friction coefficient $\mu$ is treated as an episode-wise constant parameter and included in the state, allowing the policy to account for variations in road friction within a single policy.
In this study, $\mu$ was varied over the range $[0.5,0.6]$ during training.
The remaining horizon $\tau$ is augmented to the state as described in \cref{sec:model_free_RL}.
The control input is
\begin{equation}
u =
\begin{bmatrix}
\dot{\delta} & F_x
\end{bmatrix}^{\top},
\end{equation}
where $\dot{\delta}$ is the steering-rate command and $F_x$ is the longitudinal force. 

To formulate a stochastic reach-avoid problem, additive Brownian disturbances are introduced into the vehicle dynamics.
The disturbances represent modeling errors and external uncertainties affecting the vehicle motion during drifting.
The target set is a neighborhood of a drift equilibrium. Following \cite{Hindiyeh2014}, the equilibrium is specified by a desired longitudinal speed, sideslip angle, and friction coefficient, and the remaining equilibrium quantities are computed from the steady-state model equations. 
The avoid set represents loss-of-control conditions, including road departure, excessive heading error, excessive sideslip, excessive yaw rate, loss of forward speed, steering saturation, and friction values outside the admissible range. 
This construction yields a physically meaningful reach-avoid problem: the controller must drive the vehicle toward
an unstable drift operating point while avoiding states that
correspond to spinout, departure, or loss of control.

\subsubsection{Training process}

We first trained a TD3 agent with an actor consisting of two hidden layers of 32 neurons each and critics consisting of three hidden layers of 64 neurons each.
In preliminary experiments, however, starting TD3 training directly from a broad initial-state distribution in the entire eight-dimensional hypercube was not effective, because most rollouts terminated without reaching the drift target and therefore provided little information about recoverable trajectories.
We therefore employed a curriculum over the initial-state distribution.
The spread of the sampled initial states around the drift equilibrium was controlled by a scalar sampling scale.
TD3 was first trained with sampling scale $0.4$ for $10^6$ learner updates so that the policy learned to recover from local perturbations around the drift equilibrium.
The sampling scale was then increased to $1.0$, and the initial states were sampled from a mixture of the $\beta$--$r$ plane, the $e_y$--$e_\psi$ plane, and the full state space, with mixture probabilities $0.3$, $0.3$, and $0.4$, respectively.
This second TD3 stage was continued until $5\times10^6$ updates, allowing the policy to gradually extend its recovery behavior toward the reach-avoid boundary.
\begin{figure}
    \centering
    \includegraphics[width=0.8\linewidth]{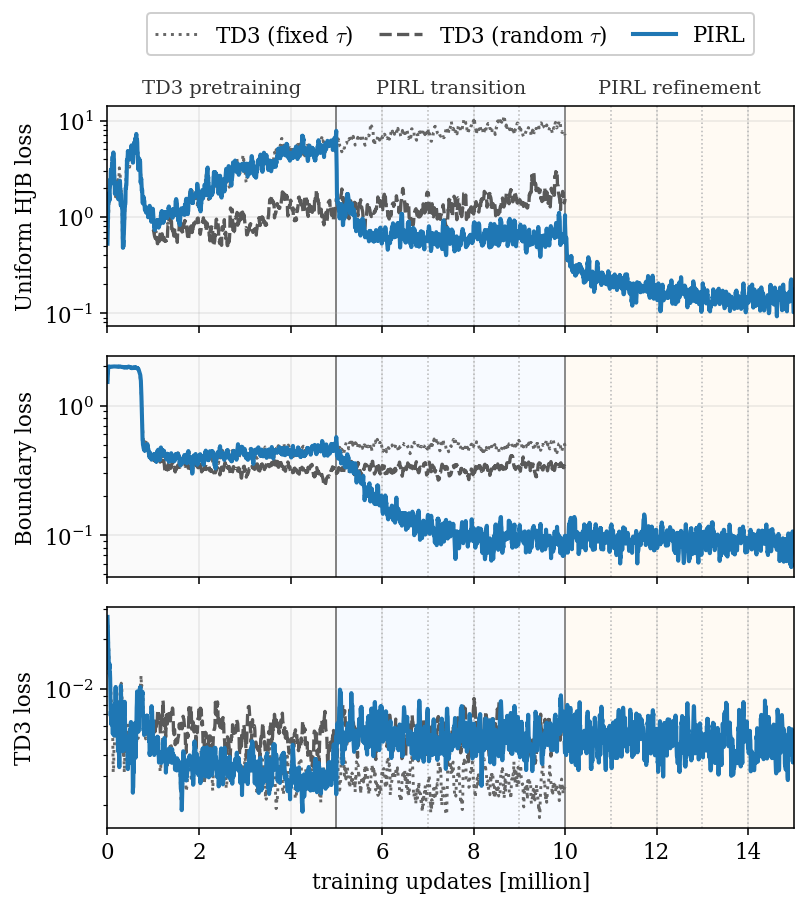}
    \caption{Evolution of losses in drifting example}
    \label{fig:drift_training_loss}
    \vspace{-3mm}
\end{figure}
From a resulting TD3 checkpoint, we continued training with the PIRL objective.
The continuation consisted of a PIRL transition stage from $5\times10^6$ to $10^7$ updates followed by a refinement stage, as illustrated in \cref{fig:drift_training_loss}.
The continuation was controlled by an outer-loop optimization based on an LLM-based coding agent (OpenAI codex was used in our experiment). 
At each round, a Python orchestration script summarized the completed training runs, constructed a scheduling prompt, invoked the coding agent, parsed the resulting scheduling decision, and launched several candidate PIRL continuations.
Each candidate was trained for $10^6$ updates before the next scheduling decision.
The scheduling prompt combined a fixed template with an interactively updated context block providing phase-specific guidance, together with a summary of the accumulated experimental results.
The prompt template used for the experiment is provided in Appendix~\ref{app:codex_prompt}.

The LLM coding agent was guided by several qualitative observations obtained from preliminary manual trials and preceding outer-loop experiments.
Throughout the PIRL continuation, the TD loss was kept active with $w_{\mathrm{TD}}=1$, while $w_{\mathrm{HJB}}$ and $w_{\mathrm{BC}}$ were introduced gradually.
Increasing the physics-informed weights too rapidly was found to degrade the policy and, in some cases, drive the critic toward a nearly trivial value function, as in the failure mode observed in \cref{sec:1d}.
The scheduling candidates were therefore designed to increase the physics-informed weights conservatively while preserving the average reward and the Monte Carlo estimate of the reach-avoid probability achieved by the TD3 checkpoint.
Preliminary trials also showed that the initialization of the remaining horizon $\tau$ in RL rollouts was important. 
We evaluated TD3 checkpoints trained with either a fixed initial horizon $\tau=T$ or horizons randomized over $[0,T]$. 
The fixed-horizon TD3 achieved better control performance and was therefore used to initialize PIRL as shown in \cref{fig:drift_training_loss}. 
However, continuing PIRL with the same fixed-horizon rollout  made the transition less stable. 
In this case, the TD samples were concentrated along trajectories initialized at $\tau=T$, whereas the HJB residual was evaluated over the broader time-augmented domain. 
This mismatch between the trajectory-based and physics-informed supervision became pronounced as the HJB weight increased and  led to critic collapse and degraded policy performance. 
We therefore randomized the initial horizon over $[0,T]$ during PIRL continuation, thereby broadening the TD-sample distribution and improving its overlap with the domain on which the HJB residual was enforced.

The distinction between the transition and refinement stages in \cref{fig:drift_training_loss} primarily reflects how the HJB collocation distribution was expanded during the outer-loop search.
In the transition stage, the HJB collocation points were sampled mainly from the replay-buffer distribution, because the policy and critic inherited from TD3 were reliable primarily around the regions visited by the closed-loop trajectories.
Applying the HJB loss uniformly over the full state domain at this stage introduced strong supervision in regions not yet supported by trajectory data and often destabilized the continuation.
Replay-based collocation instead imposed local PDE consistency around the reach-avoid structure already established by TD learning, allowing the physics-informed weights to be increased without abruptly altering the learned policy-value relationship.
Once the average reward and Monte Carlo reach-avoid probability had stabilized, the outer-loop search entered the refinement stage, in which the collocation distribution was gradually expanded away from the replay-buffer samples and uniformly sampled points were increasingly incorporated.
This expansion was intended to improve PDE consistency in regions less frequently visited by the current policy while retaining the control performance obtained during the transition.
As shown in \cref{fig:drift_training_loss}, the HJB loss decreases substantially during the refinement stage, whereas the TD loss remains at approximately the same level and the boundary loss remains small, indicating that the critic's PDE and boundary consistency was improved without substantially disrupting the trajectory-based value structure.

\subsubsection{Learning results}

\begin{figure}[t]
  \centering
  \begin{subfigure}[t]{0.47\linewidth}
    \centering
    \includegraphics[width=\linewidth]{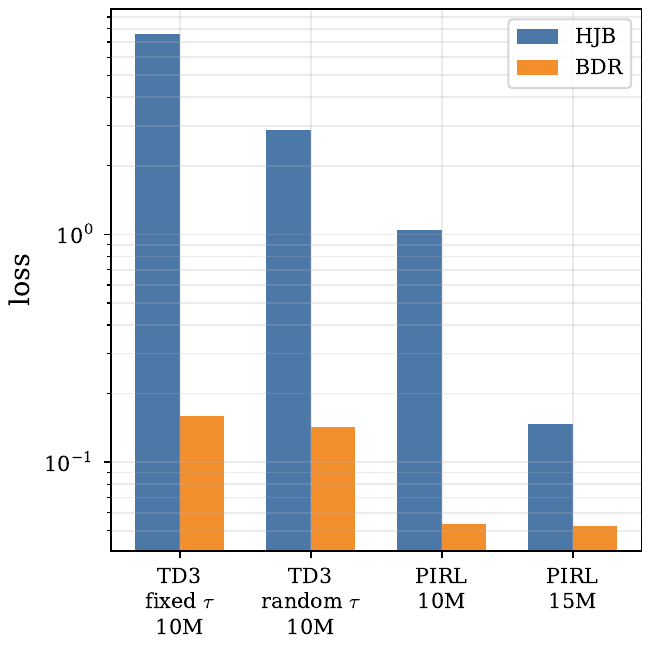}
    \caption{HJB and boundary loss}
    \label{fig:learning_results:a}
  \end{subfigure}\hfill
  \begin{subfigure}[t]{0.52\linewidth}
    \centering
    \includegraphics[width=\linewidth]{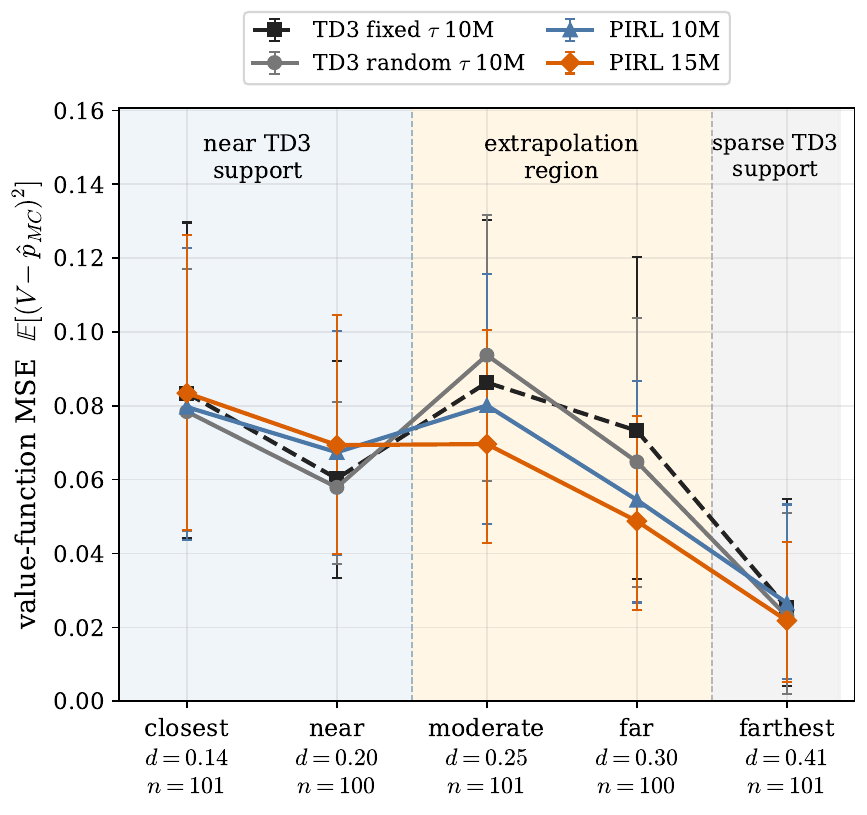}
    \caption{Value-function accuracy}
    \label{fig:learning_results:b}
  \end{subfigure}
  \caption{PDE consistency and value-function accuracy}
  \label{fig:drift-error-validation}
\end{figure}

\Cref{fig:drift-error-validation} summarizes the validation results for the learned value functions.
\Cref{fig:learning_results:a} compares the validation HJB and boundary-condition losses of the fixed- and random-horizon TD3 baselines with those of PIRL at the end of the transition stage ($10$M updates)
and after refinement ($15$M updates).
PIRL achieves lower HJB and boundary-condition losses than both TD3
baselines, and the refinement stage further improves HJB consistency.
\Cref{fig:learning_results:b} evaluates the accuracy of the learned value by comparing it with the empirical reach-avoid probability estimated from Monte Carlo closed-loop simulations under each trained policy.
The evaluation states are grouped according to their distance from the states visited  by the TD3 policy.
The closest and near groups correspond to regions well covered by TD3 trajectories, whereas the moderate and far groups require increasing extrapolation beyond the TD3 visitation distribution.
Near the TD3 trajectories, the TD3 baselines achieve lower value-function errors, whereas the advantage of PIRL becomes apparent as the evaluation states move away from the TD3 visitation distribution. 
In particular, PIRL at $15$M updates achieves the lowest MSE in the moderate
and far groups.
When averaged over all evaluation states, the MSE values are $0.0657$,
$0.0636$, $0.0617$, and $0.0586$ for fixed-horizon TD3, random-horizon TD3, PIRL at $10$M, and PIRL at $15$M, respectively.
The errors decrease in the farthest group for all four models, because the reach-avoid probability is close to zero for many states in this region and is therefore relatively easy to approximate. 
Overall, PIRL improves value-function accuracy, particularly outside the
region well supported by TD3 trajectories, suggesting improved
generalization beyond the TD3 visitation distribution.

Finally, to provide a physical interpretation of the learned value function, \cref{fig:drift_value_contours} shows the reachability value learned by PIRL at $15$M updates on two representative two-dimensional slices of the state space.
The arrows represent the projected closed-loop vector field.
Such phase-portrait visualizations are widely used in vehicle dynamics because they provide an intuitive picture of the direction of motion,
equilibrium structure, and stable operating regions in low-dimensional
projections~\cite{Bobier2019}.
In both the $\beta$--$r$ and $e_y$--$e_\psi$ planes, the learned value function forms a high-value region around the drift equilibrium and extends along the direction suggested by the projected closed-loop vector field.
However, the reachability value cannot be fully interpreted from a two-dimensional projected vector field, because reachability also depends on the coupled evolution of the remaining state variables and the avoid-set constraints.
Consequently, even when the projected arrows point toward a high-value region on a given slice, the learned value does not necessarily become large.
\Cref{fig:drift_results} complements this interpretation by showing stochastic closed-loop rollouts from three representative initial conditions in the $\beta$--$r$ plane.
The resulting trajectories are projected onto the $\beta$--$r$ and $e_y$--$e_\psi$ planes and overlaid on the learned value contours.
The comparison between the two projections reveals recovery and failure mechanisms that cannot be inferred from a single projected phase portrait.
For example, the orange initial state appears recoverable in the $\beta$--$r$ plane because the projected trajectories move toward the high-value region. 
In the $e_y$--$e_\psi$ plane, however, the same trajectories leave the path-relative region from which the drift target can be reached.
This explains why the learned value assigned to this initial condition does not become large, even though the projected flow in the $\beta$--$r$ plane points toward the high-value region.
These results show that the learned critic appropriately captures the reach-avoid structure of the closed-loop phase space, including aspects that cannot be inferred from individual low-dimensional projections in classical vehicle dynamics analysis.



\begin{figure}[t]
    \centering
    \begin{subfigure}[t]{0.49\linewidth}
        \centering
        \includegraphics[width=\linewidth]{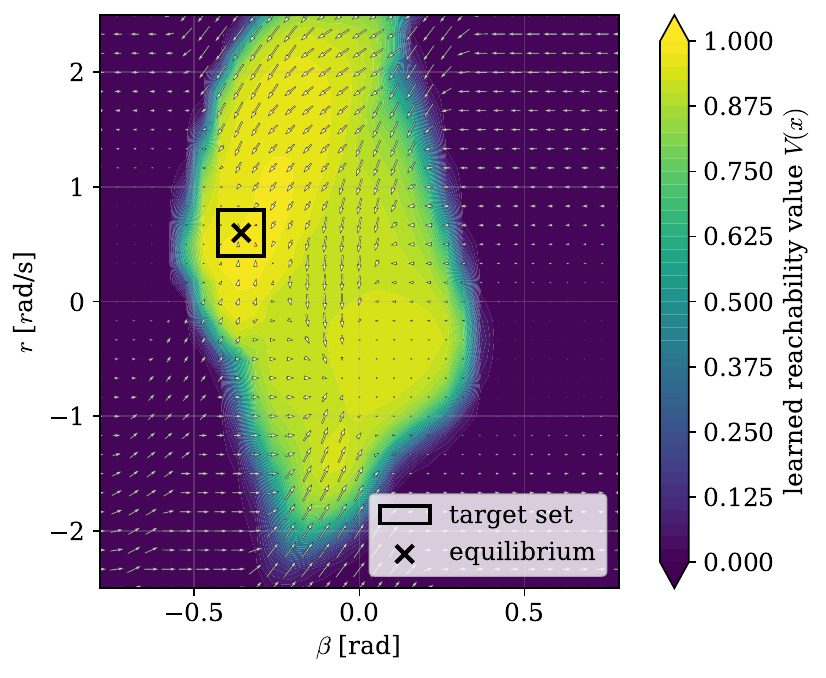}
        \caption{$\beta$--$r$ plane}
        \label{fig:drift_value_contour_beta_r}
    \end{subfigure}
    \hfill
    \begin{subfigure}[t]{0.49\linewidth}
        \centering
        \includegraphics[width=\linewidth]{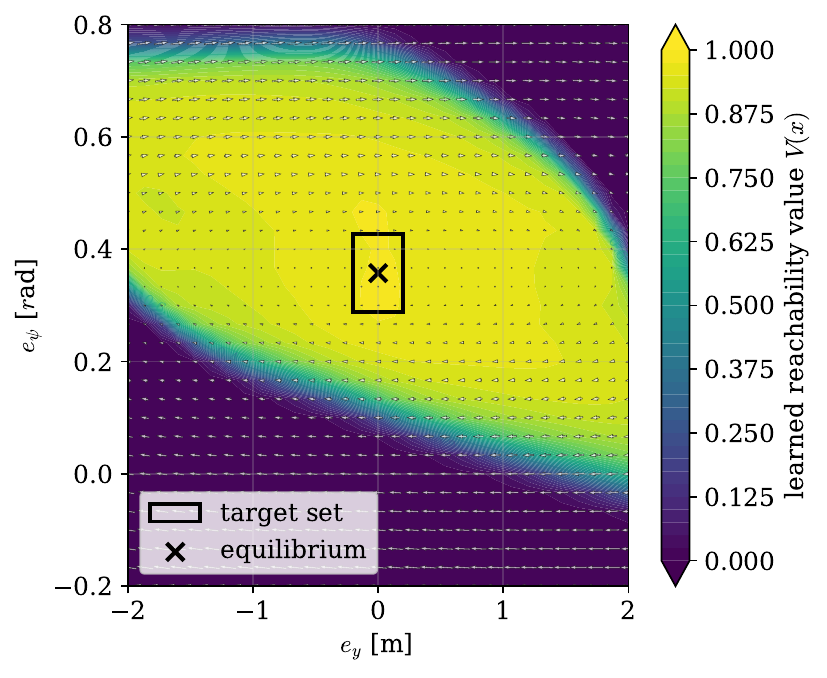}
        \caption{$e_y$--$e_\psi$ plane}
        \label{fig:drift_value_contour_ey_epsi}
    \end{subfigure}
    \caption{Learned value on two-dimensional slices with $\mu=0.55$}
    \label{fig:drift_value_contours}
\end{figure}

\begin{figure}[t]
    \centering
    \begin{subfigure}[t]{0.46\linewidth}
        \centering
        \includegraphics[width=\linewidth]{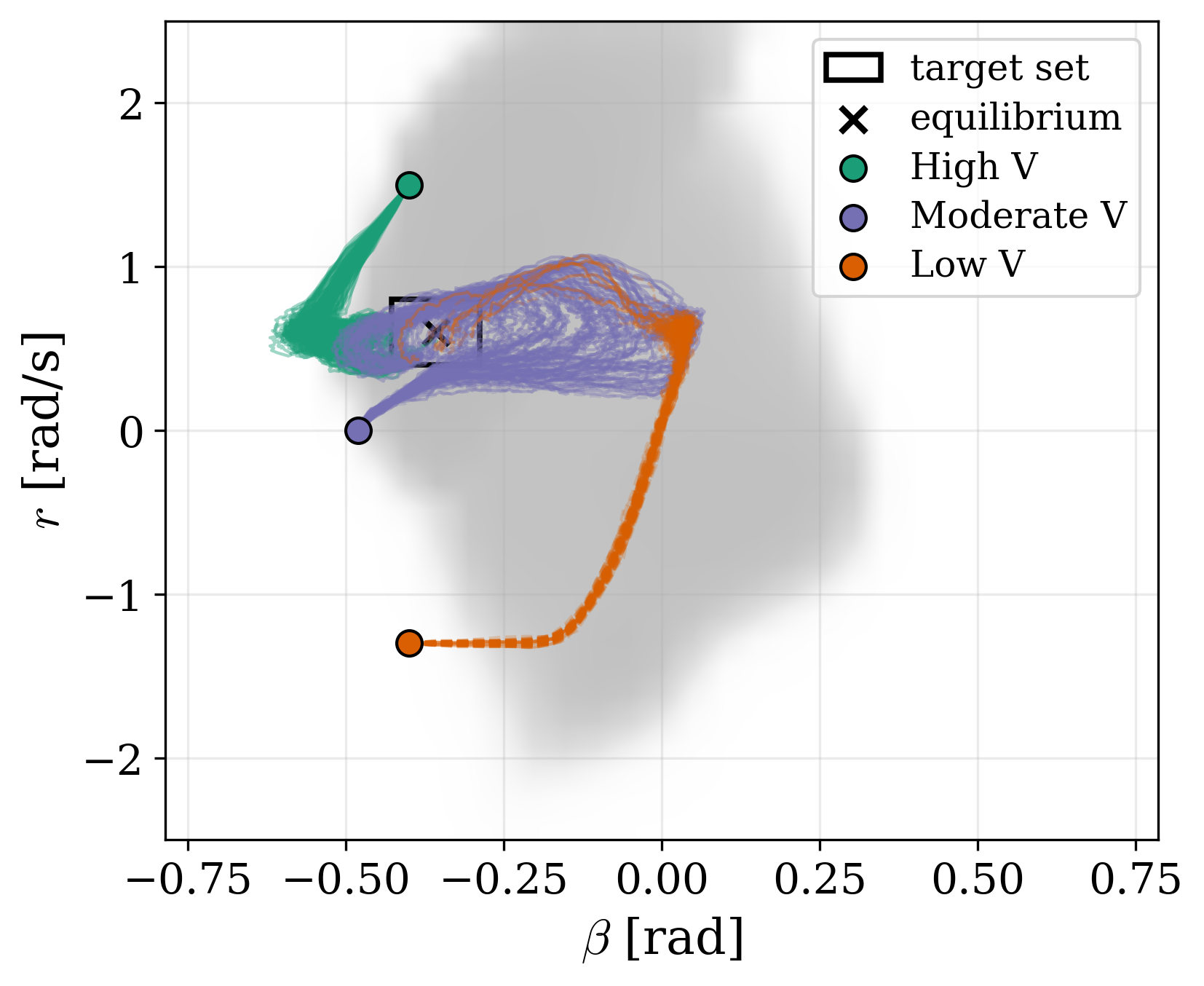}
        \caption{Trajectories on $\beta$--$r$ plane}
        \label{fig:drift_rollouts_beta_r}
    \end{subfigure}
    \hfill
    \begin{subfigure}[t]{0.46\linewidth}
        \centering
        \includegraphics[width=\linewidth]{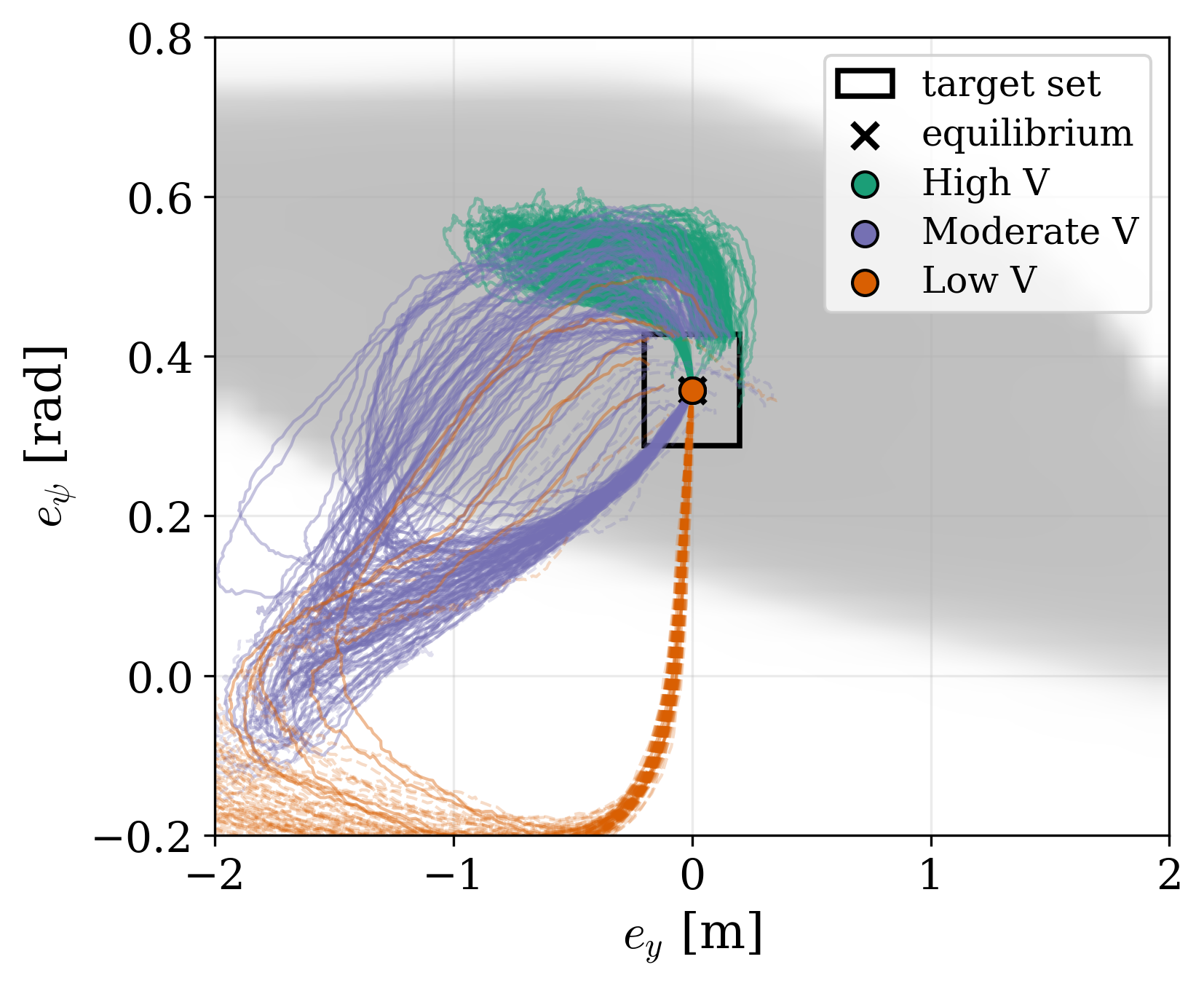}
        \caption{Trajectories on $e_y$--$e_\psi$ plane}
        \label{fig:drift_rollouts_ey_epsi}
    \end{subfigure}
    \caption{Closed-loop rollout results}
    \label{fig:drift_results}
\end{figure}

\section{Conclusions}

This paper proposed a PIRL framework for stochastic reach-avoid set characterization. 
We first established an equivalent RL formulation of the stochastic reach-avoid probability and then developed a scheduled PIRL algorithm that combines TD-based actor-critic learning with HJB residual and boundary-condition losses. 
The learned critic can be related to the HJB residual and boundary mismatch, giving a residual-based interpretation of the value-function error.
Numerical experiments demonstrated the effectiveness of the proposed framework. 
In the one-dimensional reachability problem, PIRL achieved accuracy comparable to successful PINN training while avoiding the failure modes observed in PINN-based policy iteration. 
In the drifting example, PIRL learned a reach-avoid value function consistent with both projected phase portraits and closed-loop rollout outcomes. 
These results indicate that combining trajectory-based RL with physics-informed losses is a promising approach for learning high-dimensional stochastic reach-avoid value functions.

Future work includes extending the proposed framework to more complex high-dimensional safety-critical systems, such as power grids, and using the learned reach-avoid value function for safe controller design.

\appendix

\subsection{Proof of Proposition\,\ref{prop:RL}} \label{proof:RL}

\begin{proof}

To begin the proof, we characterize the event associated with the probability $\Psi^\mu_\mathrm{RA}(\tau, x)$ using the augmented system~\eqref{eq:augmented_dynamics}. Let us define the following event $\mathcal{E}_1$:
\begin{align}
 \mathcal{E}_1 := \bigcup_{j=0}^{N(\tau)} \left( \{ \tilde{X}_j \in A \} \cap \bigcap_{k=0}^{j-1} \{ \tilde{X}_k \notin A \cup B \} \right). 
\end{align}
The event $\mathcal{E}_1$ directly represents the probability $\Psi^\mu_\mathrm{RA}(\tau, x)$ defined in \eqref{eq:reach_avoid_prob}, since it describes the scenario in which the process enters the target set $A$ for the first time at some time $j \in \mathcal{N}_\tau$, while avoiding the unsafe set $B$ at all previous steps $k < j$.
Note that, by the absorbing structure introduced in \eqref{eq:absorbing_structure}, the state $\tilde{X}_k$ evolves identically to the original system state $X_k$ as long as the trajectory remains outside $A \cup B$, and it follows that $\tilde{X}_k = X_k$ for all $k \leq j$ on $\mathcal{E}_1$.
Therefore, 
\begin{align}
    \Psi^\mu_\mathrm{RA}(\tau, x) = \mathbb{P}(\mathcal{E}_1 \,|\, Y_0 = y, \mu).
\end{align}

Next, we show that the event $\mathcal{E}_1$ is equivalent to the following event $\mathcal{E}_2$:
\begin{align}
\mathcal{E}_2 := { \tilde{X}_{N(\tau)} \in A }.
\end{align}
First, suppose that $\omega \in \mathcal{E}_1$. Then, by definition, there exists some $j \in \mathcal{N}_\tau$ such that $\tilde{X}_j \in A$ and $\tilde{X}_k \notin A \cup B$ for all $k < j$.
By the absorbing property of the system, the state remains in $A$ for all $k > j$. 
Thus, $\tilde{X}_{N(\tau)} \in A$, and hence $\omega \in \mathcal{E}_2$.
Conversely, suppose that $\omega \in \mathcal{E}_2$, \ie, $\tilde{X}_{N(\tau)} \in A$.
Then, by the absorbing property, there must exist some $j \le N(\tau)$ such that $\tilde{X}_j \in A$ and $\tilde{X}_k \notin A \cup B$ for all $k < j$. Indeed, if the trajectory had ever entered $B$, then the state would be absorbed in $B$ and could not reach $A$ thereafter; and if it had entered $A$ earlier, the state would have remained in $A$ up to time $N(\tau)$. Let $j^\ast$ be the smallest index such that $\tilde{X}_{j^\ast} \in A$. Then $\tilde{X}_k \notin A \cup B$ for all $k < j^\ast$, and hence $\omega \in \mathcal{E}_1$.
Therefore, $\mathcal{E}_1 = \mathcal{E}_2$, and we have
\begin{align} \label{eq:proof_1_A}
    \Psi^\mu_\mathrm{RA}(\tau, x) = \mathbb{P}(\tilde{X}_{N(\tau)} \in A \mid Y_0 = y, \mu).
\end{align}

Now, consider the value function $v^\mu_\mathrm{RL}(y)$ defined in~\eqref{eq:rl_value_function}. 
Since $T_k = \tau - k \Delta t$, we have $T_k \in \mathcal{T}$ if and only if $k = N(\tau)$. Therefore, only one term in the summation is non-zero:
\begin{align} \label{eq:proof_1_B}
    v^\mu_\mathrm{RL}(y) = \mathbb{E} \left[ \mI_A(\tilde{X}_{N(\tau)}) \,\middle|\, Y_0 = y, \mu \right].
\end{align}
Therefore, from \eqref{eq:proof_1_A} and \eqref{eq:proof_1_B}, we conclude
\begin{align}
    v^\mu_\mathrm{RL}(y) = \Psi^\mu_\mathrm{RA}(\tau, x).
\end{align}
Since $r(Y_k) \in \{0,1\}$, the value function $v^\mu_\mathrm{RL}(y)$ lies in $[0,1]$.

\end{proof}

\subsection{Proof of Theorem\,\ref{thm:pde_characterization}}  
\label{proof:pde_characterization}

\begin{proof}
We first prove statement (i).
Fix $\mu\in\mathcal{U}_{\mathrm M}$ and $\Delta t>0$.
Since the reward $r_\epsilon(Y_k)$ is nonzero only when
$T_k\in[0,\Delta t)$, we have
\begin{align}
v_{\mathrm{RL},\epsilon}^{\mu}
(y)
=
\mathbb E
\left[
l_\epsilon
\left(
\widetilde X_{N(\tau)}
\right)
\,\middle|\,
Y_0=y,\mu
\right].
\end{align}
By construction, $\{\widetilde X_k\}_{k\ge0}$ is 
a time-discretized approximation of the controlled SDE under the piecewise-constant feedback introduced in Sec.~III, with the sets $A_\epsilon$ and $B$ treated as absorbing sets.
Let $\{\hat X_s\}_{s\ge0}$ be the corresponding continuous-time process. 
Standard time-discretization results for stopped diffusions imply that the approximation error vanishes as $\Delta t\to0$.
Thus, for any compact set $K\subset[0,T]\times O_\epsilon$,
\begin{align}
\lim_{\Delta t\to0}
\sup_{(\tau,x)\in K}
\mathbb E
\left[
\left|
\widetilde X_{N(\tau)}
-
\hat X_\tau
\right|
\right]
=0,
\end{align}
where we used $N(\tau)\Delta t\to\tau$.
Since $l_\epsilon$ is Lipschitz continuous, this implies
\begin{align}
&
\sup_{(\tau,x)\in K}
\left|
v_{\mathrm{RL},\epsilon}^\mu (y)
-
\mathbb E
\left[
l_\epsilon(\hat X_\tau)
\,\middle|\,
\hat X_0=x,\mu
\right]
\right|
\\
&\quad
\le
L_\epsilon
\sup_{(\tau,x)\in K}
\mathbb E
\left[
\left|
\widetilde X_{N(\tau)}-\hat X_{\tau}
\right|
\right]
\to0,
\end{align}
where $L_\epsilon$ is a Lipschitz constant of $l_\epsilon$.
Therefore, the limit
\begin{align}
v_\epsilon^\mu(\tau,x)
:=
\mathbb E
\left[
l_\epsilon(\hat X_\tau)
\,\middle|\,
\hat X_0=x,\mu
\right]
\end{align}
exists uniformly on compact subsets of
$[0,T]\times O_\epsilon$.
Finally, by the stochastic representation of solutions to the
Kolmogorov backward equation for the stopped diffusion,
$v_\epsilon^\mu$ is the unique classical solution of
the linear PDE~\eqref{eq:linear_PDE} with the boundary
condition~\eqref{eq:parabolic_boundary}.

We next prove statement (ii).
First, since every piecewise-constant feedback control used
in the discrete-time RL formulation is an admissible control
for the continuous-time diffusion, we have
\begin{align}
v_{\mathrm{RL},\epsilon}^{*}(y)
\le
v_\epsilon^*(N(\tau)\Delta t,x).
\end{align}
Here, the right-hand side is the optimal continuous-time value
with remaining horizon $N(\tau)\Delta t$.
Since $N(\tau)\Delta t\to\tau$ as $\Delta t\to0$ and
$v_\epsilon^*$ is continuous, it follows that, for every
compact set $K\subset[0,T]\times O_\epsilon$,
\begin{align}
\limsup_{\Delta t\to0}
\sup_{(\tau,x)\in K}
\left(
v_{\mathrm{RL},\epsilon}^{*}(y)
-
v_\epsilon^*(\tau,x)
\right)
\le 0.
\label{eq:inequality_1}
\end{align}
Conversely, let $\mu^*\in\mathcal U_{\mathrm M}$ be an
optimal Markov policy for the continuous-time smoothed
reach-avoid problem. Such a policy exists under the classical
solvability assumptions stated in \cref{sec:pde}, and it satisfies
$v_\epsilon^{\mu^*}=v_\epsilon^*$.
On the other hand, by the definition of the discrete-time optimal RL value function,
\begin{align}
v_{\mathrm{RL},\epsilon}^{*}(y)
\ge
v_{\mathrm{RL},\epsilon}^{\mu^*}(y).
\end{align}
By statement (i), $v_{\mathrm{RL},\epsilon}^{\mu^*}$ converges
uniformly on compact subsets of $[0,T]\times O_\epsilon$ to
$v_\epsilon^{\mu^*}=v_\epsilon^*$. Hence
\begin{align}
\liminf_{\Delta t\to0}
\inf_{(\tau,x)\in K}
\left(
v_{\mathrm{RL},\epsilon}^{*}(y)
-
v_\epsilon^*(\tau,x)
\right)
\ge 0.
\label{eq:inequality_2}
\end{align}
Combining inequalities \eqref{eq:inequality_1} and \eqref{eq:inequality_2} yields
\begin{align}
\lim_{\Delta t\to0}
\sup_{(\tau,x)\in K}
\left|
v_{\mathrm{RL},\epsilon}^{*}(y)
-
v_\epsilon^*(\tau,x)
\right|
=0.
\end{align}
Finally, by the dynamic programming principle and the
verification theorem for the smoothed continuous-time
reach-avoid problem, $v_\epsilon^*$ is the unique classical
solution of \eqref{eq:value_PDE} with the boundary condition~\eqref{eq:parabolic_boundary}.
\end{proof}

\subsection{Proof of Corollary\,\ref{thm:error_bound}}  
\label{proof:error_bound}

\begin{proof}
The proof follows the same argument as the error
analysis in~\cite{Wang2026}.
Specifically, the comparison principle for uniformly parabolic
equations is first applied to relate the approximation error to
the boundary mismatch and the PDE residual associated with the
learned policy.
The desired upper bound \eqref{eq:error_upper_bound} is then obtained by noting that the
value function under any admissible policy does not exceed the
optimal reach-avoid value function.
If $\mu_\phi=\mu^*$, the policy evaluation equation coincides
with the optimal HJB equation, and the corresponding PINN
approximation estimate gives the two-sided error
bound \eqref{eq:two_side_bound}.
\end{proof}

\subsection{Prompt for LLM-Guided PIRL Scheduling}
\label{app:codex_prompt}

The prompt consisted of three components: a fixed scheduling template, a user-provided context block containing phase-specific guidance and observations from previous trials, and an automatically generated JSON summary of the completed runs.
The coding agent was required to return only a JSON scheduling plan, which was parsed and executed by the orchestration script.
The prompt template was as follows. 

\begin{Verbatim}[
fontsize=\scriptsize,
breaklines=true,
breakanywhere=true,
breaksymbolleft={},
breaksymbolright={},
frame=single,
framesep=2mm,
xleftmargin=1mm,
xrightmargin=1mm,
baselinestretch=0.90
]
You are controlling the next round of PIRL weight and
collocation-distribution scheduling.

Objective:
- Treat {target_total_updates} total updates as the first milestone, not a hard stop.
- By that milestone, outperform the TD3 baseline from  {baseline_checkpoint}.
- If reward and MC reachability remain stable, keep progressing beyond the milestone.
- Keep final reward no worse than TD3 while reducing value calibration error mean|MC-V|.

Output:
- Write ONLY valid JSON to: {next_plan_path}
- Return exactly {max_parallel_candidates} candidate(s).

Schema:
{ "round_note": "brief rationale",
  "candidates": [
    { "name": "short_unique_name",
      "start_checkpoint": "path/to/ckpt-N",
      "schedule_initial": [1.0, hjb0, bdr0],
      "schedule_final": [1.0, hjb1, bdr1],
      "schedule_center": 500000,
      "schedule_sharpness": 1e-5, 
      }
  ]
}

Selection rules:
- Continue from the best safe checkpoint when reward and MC are stable.
- If reward or meanMC degraded, reduce weights or slow the schedule before trying larger weights.
- Increase HJB/BDR gradually.
- Increase replay_expand jitter gradually; prefer holding or backing off jitter before increasing HJB/BDR when reward or meanMC weakens.
- Candidate-level pinn_* fields are optional; omitted fields inherit [training_env] defaults from the TOML config.
- Use at most one TD3-restart control per round, unless all scheduling checkpoints collapsed.
- Do not repeat an existing start_checkpoint + schedule_initial + schedule_final + pinn_expand_jitter_final combination unless round_note explains why.
{context_block}

Completed results JSON:
{completed_results_json}
\end{Verbatim}

\section*{Acknowledgment}

The authors would like to thank Naoki Hashima and Eiko Furutani for discussions on the interpretation of the RL characterization of the reach-avoid problem, and Yuxuan Zhu for reporting the PINN failure mode in PIRL experiments.

\bibliographystyle{IEEEtran}
\bibliography{reference}

@inproceedings{omura2025gradual,
  title     = {Gradual Transition from Bellman Optimality Operator to Bellman Operator in Online Reinforcement Learning},
  author    = {Omura, Motoki and Ota, Kazuki and Osa, Takayuki and Mukuta, Yusuke and Harada, Tatsuya},
  booktitle = {Proceedings of the 42nd International Conference on Machine Learning},
  series    = {Proceedings of Machine Learning Research},
  volume    = {267},
  year      = {2025}
}

@inproceedings{asadi2017alternative,
title={An Alternative to Softmax in Reinforcement Learning},
author={Asadi, Kavosh and Littman, Michael L.},
booktitle={Proceedings of the 34th International Conference on Machine Learning (ICML)},
series    = {Proceedings of Machine Learning Research},
pages={243--252},
year={2017}
}

@inproceedings{kostrikov2022iql,
title={Offline Reinforcement Learning with Implicit {Q}-Learning},
author={Kostrikov, Ilya and Nair, Ashvin and Levine, Sergey},
booktitle={International Conference on Learning Representations (ICLR)},
series    = {Proceedings of Machine Learning Research},
year={2022}
}

@ARTICLE{Wang2026,
  author={Wang, Zhuoyuan and Chern, Albert and Nakahira, Yorie},
  journal={IEEE Transactions on Automatic Control}, 
  title={Generalizable Physics-Informed Learning for Stochastic Safety-Critical Systems}, 
  year={2026},
  volume={71},
  number={4},
  pages={2155-2170},
  doi={10.1109/TAC.2025.3622907}}

@article{Xue2026,
title = {Sufficient and necessary barrier-like conditions for safety and reach-avoid verification of stochastic discrete-time systems},
journal = {Automatica},
volume = {187},
pages = {112919},
year = {2026},
doi = {https://doi.org/10.1016/j.automatica.2026.112919},
author = {Bai Xue},
}

@article{Luo2025,
  title={Physics-informed neural networks for PDE problems: a comprehensive review},
  author={Luo, Kuang and Zhao, Jingshang and Wang, Yingping and Li, Jiayao and Wen, Junjie and Liang, Jiong and Soekmadji, Henry and Liao, Shaolin},
  journal={Artificial Intelligence Review},
  volume={58},
  number={10},
  pages={323},
  year={2025},
  publisher={Springer}
}

@article{Banerjee2025,
title = {A survey on physics informed reinforcement learning: Review and open problems},
journal = {Expert Systems with Applications},
volume = {287},
pages = {128166},
year = {2025},
doi = {https://doi.org/10.1016/j.eswa.2025.128166},
author = {Chayan Banerjee and Kien Nguyen and Clinton Fookes and Maziar Raissi},
}

@ARTICLE{Ganai2024,
  author={Ganai, Milan and Gao, Sicun and Herbert, Sylvia L.},
  journal={IEEE Open Journal of Control Systems}, 
  title={{Hamilton-Jacobi} Reachability in Reinforcement Learning: A Survey}, 
  year={2024},
  volume={3},
  number={},
  pages={310-324},
  doi={10.1109/OJCSYS.2024.3449138}}

@inproceedings{Meng2024,
author = {Meng, Yiming and Zhou, Ruikun and Mukherjee, Amartya and Fitzsimmons, Maxwell and Song, Christopher and Liu, Jun},
title = {Physics-informed neural network policy iteration: algorithms, convergence, and verification},
year = {2024},
booktitle = {Proceedings of the 41st International Conference on Machine Learning},
articleno = {1441},
numpages = {26},
location = {Vienna, Austria},
series_ = {ICML'24}
}

@article{Wang2024PIRL,
author = {Wang, Yujia and Wu, Zhe},
title = {Physics-informed reinforcement learning for optimal control of nonlinear systems},
journal = {AIChE Journal},
volume = {70},
number = {10},
pages = {e18542},
doi = {https://doi.org/10.1002/aic.18542},
year = {2024}
}

@INPROCEEDINGS{HoshinoACC2024,
  author={Hoshino, Hikaru and Nakahira, Yorie},
  booktitle={2024 American Control Conference (ACC)}, 
  title={Physics-informed {RL} for Maximal Safety Probability Estimation}, 
  year={2024},
  pages={3576--3583},
  doi={10.23919/ACC60939.2024.10644621}}

@INPROCEEDINGS{HoshinoITSC2024,
  author={Hoshino, Hikaru and Li, Jiaxing and Menon, Arnav and Dolan, John M. and Nakahira, Yorie},
  booktitle={2024 IEEE 27th International Conference on Intelligent Transportation Systems (ITSC)}, 
  title={Autonomous Drifting Based on Maximal Safety Probability Learning}, 
  year={2024},
  pages={3930--3935},
  doi={10.1109/ITSC58415.2024.10919509}}

@inproceedings{Zikelic2023,
  title={Learning control policies for stochastic systems with reach-avoid guarantees},
  author={{\v{Z}}ikeli{\'c}, {\DJ}or{\dj}e and Lechner, Mathias and Henzinger, Thomas A and Chatterjee, Krishnendu},
  booktitle={Proceedings of the AAAI Conference on Artificial Intelligence},
  volume={37},
  number={10},
  pages={11926--11935},
  year={2023}
}

@inproceedings{Mukherjee2023,
    title={Bridging Physics-Informed Neural Networks with Reinforcement Learning: {Hamilton-Jacobi-Bellman} Proximal Policy Optimization ({HJBPPO})},
    author={Amartya Mukherjee and Jun Liu},
    booktitle={ICML Workshop on New Frontiers in Learning, Control, and Dynamical Systems},
    year={2023},
}

@article{S.Wang2023:PinnGuide,
  title={An expert's guide to training physics-informed neural networks},
  author={Wang, Sifan and Sankaran, Shyam and Wang, Hanwen and Perdikaris, Paris},
  journal={arXiv preprint arXiv:2308.08468},
  year={2023}
}

@inproceedings{Schmid2023,
title = {Probabilistic Reachability and Invariance Computation of Stochastic Systems using Linear Programming},
pages = {11229-11234},
year = {2023},
booktitle = {22nd IFAC World Congress},
doi = {https://doi.org/10.1016/j.ifacol.2023.10.853},
author = {Niklas Schmid and John Lygeros},
}

@article{Liao2022,
author = {Wei Liao and Taotao Liang and Xiaohui Wei and Jizhou Lai and},
title = {A novel unified framework for solving reachability and invariance problems},
journal = {International Journal of Control},
volume = {96},
number = {6},
pages = {1436--1447},
year = {2023},
publisher = {Taylor \& Francis},
doi = {10.1080/00207179.2022.2051749},
}

@article{Thorpe2022,
title = {State-based confidence bounds for data-driven stochastic reachability using Hilbert space embeddings},
journal = {Automatica},
volume = {138},
pages = {110146},
year = {2022},
doi = {https://doi.org/10.1016/j.automatica.2021.110146},
author = {Adam J. Thorpe and Kendric R. Ortiz and Meeko M.K. Oishi},
}

@article{Leiteritz2021,
  title={How to avoid trivial solutions in physics-informed neural networks},
  author={Leiteritz, Raphael and Pfl{\"u}ger, Dirk},
  journal={arXiv preprint arXiv:2112.05620},
  year={2021}
}

@ARTICLE{Hsu2021,
  title         = "Safety and Liveness Guarantees through {Reach-Avoid} Reinforcement Learning",
  author        = "Hsu, Kai-Chieh and Rubies-Royo, Vicen{\c c} and Tomlin, Claire J and Fisac, Jaime F",
  month         =  dec,
  year          =  2021,
  archivePrefix = "arXiv",
  primaryClass  = "cs.LG",
  eprint        = "2112.12288",
  journal = {arXiv: 2112.12288 [cs.LG]}
}

@INPROCEEDINGS{Xue2021,
  author={Xue, Bai and Li, Renjue and Zhan, Naijun and Fränzle, Martin},
  booktitle={2021 American Control Conference (ACC)}, 
  title={Reach-avoid Analysis for Stochastic Discrete-time Systems}, 
  year={2021},
  pages={4879-4885},
  doi={10.23919/ACC50511.2021.9483095}}

@INPROCEEDINGS{Chern21,
  author={Chern, Albert and Wang, Xiang and Iyer, Abhiram and Nakahira, Yorie},
  booktitle={2021 60th IEEE Conference on Decision and Control (CDC)}, 
  title={Safe Control in the Presence of Stochastic Uncertainties}, 
  year={2021},
  volume={},
  number={},
  pages={6640-6645},
  doi={10.1109/CDC45484.2021.9683542}}

@ARTICLE{Thorpe2020,
  author={Thorpe, Adam J. and Oishi, Meeko M. K.},
  journal={IEEE Control Systems Letters}, 
  title={Model-Free Stochastic Reachability Using Kernel Distribution Embeddings}, 
  year={2020},
  volume={4},
  number={2},
  pages={512--517},
  doi={10.1109/LCSYS.2019.2954102}}

@INPROCEEDINGS{Fisac2019,
  title     = "Bridging {Hamilton-Jacobi} Safety Analysis and Reinforcement Learning",
  booktitle = "2019 International Conference on Robotics and Automation ({ICRA})",
  author    = "Fisac, Jaime F and Lugovoy, Neil F and Rubies-Royo, Vicen{\c c} and Ghosh, Shromona and Tomlin, Claire J",
  pages     = "8550--8556",
  month     =  may,
  year      =  2019,
}

@ARTICLE{Raissi2019,
  title    = "Physics-informed neural networks: A deep learning framework for solving forward and inverse problems involving nonlinear partial differential equations",
  author   = "Raissi, M and Perdikaris, P and Karniadakis, G E",
  journal  = "J. Comput. Phys.",
  volume   =  378,
  pages    = "686--707",
  month    =  feb,
  year     =  2019,
}

@article{Bobier2019,
  title={Vehicle control synthesis using phase portraits of planar dynamics},
  author={Bobier-Tiu, Carrie G and Beal, Craig E and Kegelman, John C and Hindiyeh, Rami Y and Gerdes, J Christian},
  journal={Vehicle System Dynamics},
  volume={57},
  number={9},
  pages={1318--1337},
  year={2019},
  publisher={Taylor \& Francis}
}

@article{Akametalu2018,
  title={A minimum discounted reward hamilton-jacobi formulation for computing reachable sets},
  author={Akametalu, Anayo K and Ghosh, Shromona and Fisac, Jaime F and Tomlin, Claire J},
  journal={arXiv preprint arXiv:1809.00706},
  year={2018}
}

@inproceedings{Fujimoto2018:TD3,
  title={Addressing function approximation error in actor-critic methods},
  author={Fujimoto, Scott and Hoof, Herke and Meger, David},
  booktitle={International conference on machine learning},
  pages={1587--1596},
  year={2018},
  organization={PMLR}
}

@INPROCEEDINGS{Gleason2017,
  author={Gleason, Joseph D. and Vinod, Abraham P. and Oishi, Meeko M. K.},
  booktitle={2017 IEEE 56th Annual Conference on Decision and Control (CDC)}, 
  title={Underapproximation of reach-avoid sets for discrete-time stochastic systems via Lagrangian methods}, 
  year={2017},
  pages={4283--4290},
  doi={10.1109/CDC.2017.8264291}}

@INPROCEEDINGS{Bansal2017,
  title     = "{Hamilton-Jacobi} reachability: A brief overview and recent advances",
  booktitle = "2017 {IEEE} 56th Annual Conference on Decision and Control ({CDC})",
  author    = "Bansal, Somil and Chen, Mo and Herbert, Sylvia and Tomlin, Claire J",
  pages     = "2242--2253",
  month     =  dec,
  year      =  2017,
}

@article{MohajerinEsfahani2016,
title = {The stochastic reach-avoid problem and set characterization for diffusions},
journal = {Automatica},
volume = {70},
pages = {43-56},
year = {2016},
issn = {0005-1098},
doi = {https://doi.org/10.1016/j.automatica.2016.03.016},
author = {Peyman {Mohajerin Esfahani} and Debasish Chatterjee and John Lygeros},
}

@article{Lillicrap2015:DDPG,
  title={Continuous control with deep reinforcement learning},
  author={Lillicrap, Timothy P. and Hunt, Jonathan J. and Pritzel, Alexander and Heess, Nicolas and Erez, Tom and Tassa, Yuval and Silver, David and Wierstra, Daan},
  journal={arXiv preprint arXiv:1509.02971},
  year={2015}
}

@article{Mnih15,
    author = {Volodymyr Mnih and Koray Kavukcuoglu and David Silver and Andrei A. Rusu and Joel Veness and Marc G. Bellemare and  Alex Graves and Martin Riedmiller and Andreas K. Fidjeland and Georg Ostrovski and Stig Petersen and Charles Beattie and Amir Sadik and  Ioannis Antonoglou and Helen King and Dharshan Kumaran and  Daan Wierstra and Shane Legg and Demis Hassabis},
    title = {Human-level control through deep reinforcement learning},
    journal = {Nature},
    year = 2015,
    volume = 518,
    pages = {529--533},
}

@ARTICLE{Hindiyeh2014,
  title     = "A Controller Framework for Autonomous Drifting: Design,
               Stability, and Experimental Validation",
  author    = "Hindiyeh, Rami Y and Christian Gerdes, J",
  journal   = "J. Dyn. Syst. Meas. Control",
  publisher = "American Society of Mechanical Engineers Digital Collection",
  volume    =  136,
  number    =  5,
  pages     = "051015",
  month     =  jul,
  year      =  2014,
}

@article{Mao2013,
title = {Stabilization of continuous-time hybrid stochastic differential equations by discrete-time feedback control},
journal = {Automatica},
volume = {49},
number = {12},
pages = {3677-3681},
year = {2013},
issn = {0005-1098},
doi = {https://doi.org/10.1016/j.automatica.2013.09.005},
author = {Xuerong Mao},
}

@ARTICLE{Summers2010,
  title    = "Verification of discrete time stochastic hybrid systems: A stochastic reach-avoid decision problem",
  author   = "Summers, Sean and Lygeros, John",
  journal  = "Automatica",
  volume   =  46,
  number   =  12,
  pages    = "1951--1961",
  month    =  dec,
  year     =  2010,
}

@ARTICLE{Abate2008,
  title    = "Probabilistic reachability and safety for controlled discrete
              time stochastic hybrid systems",
  author   = "Abate, Alessandro and Prandini, Maria and Lygeros, John and Sastry, Shankar",
  journal  = "Automatica",
  volume   =  44,
  number   =  11,
  pages    = "2724--2734",
  month    =  nov,
  year     =  2008
}

@book{Fleming06,
    author = {W.~H. Fleming and H.~M. Soner},
    title = {Controlled Markov Processes and Viscosity Solutions},
    edition = {2nd},
    publisher = {Springer},
    year = 2006
}

@ARTICLE{Mitchell2005,
  author={Mitchell, I.M. and Bayen, A.M. and Tomlin, C.J.},
  journal={IEEE Transactions on Automatic Control}, 
  title={A time-dependent {Hamilton-Jacobi} formulation of reachable sets for continuous dynamic games}, 
  year={2005},
  volume={50},
  number={7},
  pages={947-957},
  doi={10.1109/TAC.2005.851439}}

@article{Lygeros2004,
  author    = {John Lygeros},
  title     = {On Reachability and Minimum Cost Optimal Control},
  journal   = {Automatica},
  year      = {2004},
  volume    = {40},
  number    = {6},
  pages     = {917--927}
}

@book{Karatzas1998,
  title={Brownian motion and stochastic calculus},
  author={Karatzas, Ioannis and Shreve, Steven},
  edition={2},
  year={1998},
  publisher={Springer}
}



\end{document}